\numberwithin{equation}{section} 
\newtheorem{theorem}{Theorem}[section]
\newtheorem{proposition}[theorem]{Proposition}
\newtheorem{remark}[theorem]{Remark}
\newtheorem{lemma}[theorem]{Lemma}
\newcommand\1{{\mathds 1}}
\newcommand{\RR}{\mathbb{R}}
\newcommand{\ZZ}{\mathbb{Z}}
\newcommand{\ri}{\mathrm{i}}
\newcommand{\cS}{\mathscr{S}}
\newcommand{\cD}{\mathcal{D}}
\newcommand{\cE}{\mathcal{E}}
\newcommand{\cC}{\mathcal{C}}
\newcommand{\cR}{\mathcal{R}}
\newcommand{\R}{\mathbb{R}}
\newcommand{\N}{\mathbb{N}}
\newcommand\bS{{\bold S}}
\def\cC{{\mathcal C}}
\def\cD{{\mathcal D}}
\def\cE{{\mathcal E}}
\def\cF{{\mathcal F}}
\def\cG{{\mathcal G}}
\def\cI{{\mathcal I}}
\def\cK{{\mathcal K}}
\def\cM{{\mathcal M}}
\def\cR{{\mathcal R}}
\def\cS{{\mathcal S}}
\def\cU{{\mathcal U}}
\def\rV{{\mathcal V}}
\def\fS{{\mathfrak S}}
\def\rd{{\mathrm{d}}}
\def\re{{\mathrm{e}}}
\def\ri{{\mathrm{i}}}
\def\rW{{\mathrm{W}}}
\def\rV{{\mathrm{V}}}
\newcommand{\TF}{{\rm TF}}
\newcommand{\per}{{\rm per}}
\newcommand{\loc}{{\rm loc}}
\newcommand{\Tr}{{\rm Tr}}
\newcommand{\VTr}{\underline{\rm Tr}}
\newcommand{\kin}{{\rm kin}}
\newcommand{\set}[1]{\left\{ #1\right\}}
\newcommand{\bra}[1]{\left( #1\right)}
\newcommand{\av}[1]{\left| #1\right|}
\newcommand{\com}[1]{\left[ #1\right]}
\newcommand\ii{{\infty}}
\numberwithin{equation}{section}
\title[DFT for 1D homogeneous materials]{On Density Functional Theory models for one-dimensional homogeneous materials}
\author[B. Bensiali]{Bouchra Bensiali}
\address[B. Bensiali]{Ecole Centrale Casablanca, Bouskoura, Ville Verte, P.B. 27182, Morocco}
\email{bouchra.bensiali@centrale-casablanca.ma}
\author[S. Lahbabi]{Salma Lahbabi}
\address[S. Lahbabi]{EMAMI, LRI, ENSEM, UHIIC, 7 Route d’El Jadida, B.P. 8118 Oasis, Casablanca; 
    UM6P-College of Computing, Lot 660, Hay Moulay Rachid Ben Guerir, 43150,
    Morocco}
\email{s.lahbabi@ensem.ac.ma}
\author[A. Maichine]{Abdallah Maichine}
\address[A. Maichine]{LAMA, Department of Mathematics, Faculty of Sciences, Mohammed V University in Rabat, 4 Av. Ibn Battouta, 1014, Rabat, Morocco}
\email{a.maichine@um5r.ac.ma}
\author[O. Mirinioui]{Othmane Mirinioui}
\address[O. Mirinioui]{Ecole Centrale Casablanca, Bouskoura, Ville Verte, P.B. 27182, Morocco}
\email{Othmane.Mirinioui@centrale-casablanca.ma}
\date{\today}
\begin{document}

\begin{abstract}
    This paper studies DFT models for homogeneous 1D materials in the 3D space. It follows our previous work about DFT models for homogeneous 2D materials in 3D. 
    We show how to reduce the problem from a 3D energy functional to a \emph{2D} energy functional. The kinetic energy is treated as in~\cite{GLM21,GLM23} by diagonilizing admissible states, and writing the kinetic energy as the infimum of a modified kinetic energy functional on reduced states. Besides, we treat here the Hartree interaction term in 2D, and show how to properly define the mean-field potential, through Riesz potential. We then show the well posdness of the reduced model and present some numerical illustrations. 
\end{abstract}

\maketitle

\tableofcontents


\section{Introduction}
In recent years, low-dimensional materials, such as 2D sheets, 1D nanowires and quantum dots 
 have become of great importance in condensed matter physics due to their distinctive features, which are relevant in many applications~\cite{VdW,review-low,Review-graphene,review-phosphorene, review-MOS2, 1D,0D}.  In order to investigate  their electronic properties, such as electron's mobility and band gap, Density Functional Theory (DFT) calculations are often used 
 ~\cite{calc-low, review-calcul-properties,review-calcul-properties-2, 1D-comp, method-2D,modeling-nano}.  Our objective is to prove and study DFT models for low-dimensional systems in order to allow low computational cost. In~\cite{BlancLeBr00}, the Thoma-Fermi-von Weisäcker model for 1D and 2D materials is derived by means of thermodynamic limit procedure. The works~\cite{GLM21,GLM23} have initiated the study of DFT models for low dimensional materials that are homogeneous, and we were particularly interested in 2D materials embedded in the 3D space. In this paper, we focus on one--dimensional materials that have an atomic length in two dimensions of the space, and a crystalline structure in the remaining  dimension. 

\medskip 
In~\cite{GLM21}, 2D materials are considered with a nuclear charge  distribution of the form
 $$\mu_{2D}(x_1,x_2,x_3)=\mu_{2D}(x_3)$$ 
 that depends only on the orthogonal direction $x_3$. The electrons are described by one body density matrices $\gamma$ that satisfy Pauli principle 
  $0\leq \gamma\leq 1$. 
 As the problem is convex, $\gamma$ share the same symmetry as $\mu$, 
 \begin{equation}\label{eq:sym2D}
  U_R \gamma=\gamma U_R,\quad \forall R\in \RR^2,
 \end{equation}
where $U_Rf(x)=f(x_1-R_1,x_2-R_2,x_3)$ is the translation operator along the first two variables. 
 Note that the density associated to a $\gamma$ satisfying~\eqref{eq:sym2D} depends only on the third variable $\rho_\gamma(x)=\rho_\gamma(x_3)$ and the  trace per unit surface of $\gamma$ can be defined by $\underline{\Tr}_{2D}(\gamma)=\int_\RR \rho_\gamma$. In the reduced Hartree-Fock (rHF) model, the energy of the system is of the form
$$
\underline{\cE}_{2D}(\gamma)=\frac12\underline{\Tr}_{2D}\bra{-\Delta_3 \gamma}+ \cD_1(\rho_\gamma-\mu), 
$$
where  $\Delta_d$ denotes the  the Laplacian operator on $L^2(\R^d)$, and $\cD_1$ is the 1D Hartree interaction. 
One main result of~\cite{GLM21} is a thorough discussion of this term and the definition of the mean-field potential (see~\cite[Proposition 3.3]{GLM21}). We prove that this problem is equivalent to a one-dimensional problem set on  operators $G$ acting on $L^2(\RR)$, which are positive $G \geq 0$, but which no longer need to satisfy 
Pauli principle. Instead, an additional term appears in the energy: 
the three-dimensional kinetic energy $\frac12\underline{\Tr}_{2D}(-\Delta_3\gamma)$
is replaced by a one-dimensional kinetic energy
of the form
$$
\frac12\Tr(-\Delta_1 G)+\pi \Tr(G^2). 
$$
Although we restrict ourselves to the reduced Hartree-Fock model for simplicity, similar derivations
can be performed for general Kohn--Sham models and one can consider models of the form 
$$
\inf\set{\frac12\Tr(-\Delta_1 G)+ \pi\Tr(G^2)+ \cD_1(\rho_G-\mu) + E^{\rm xc}(\rho_G)}.
$$
 In \cite{GLM23}, we have considered a uniform magnetic field which is perpendicular to the 2D material axis. The reduction here is obtained considering operators commuting with \emph{magnetic translations}. We first provide a decomposition of such operators and then state the equivalent one-dimensional problem in which the kinetic energy per unit surface $\frac12\underline{\Tr}_{2D}(\mathrm{L}_\mathrm{A} \gamma)$  has been repalaced by $$ \frac12\Tr(-\Delta_1 G)+ \Tr(F(b,G)),
 $$  
 where $\mathrm{L}_\mathrm{A}$ denotes the three-dimensional Landau operator, $b$ the magnetic field strenghten, and $F(b,\cdot)$ a suitable piecewise linear function,  see for instance \cite[Theorem 3.1]{GLM23}.
 
\medskip
In the present paper, we consider homogeneous 1D materials, i.e. nuclear charge distributions of the form 
$$
\mu(x_1,x_2,x_3)=\mu(x_1,x_2)
$$
that do not depend on the third variable $x_3$, the axis of the material. As we consider convex models, we assume that the electronic density matrices follow the same symmetries
\begin{equation}\label{eq:1D-symmetry}
	\tau_R \gamma=\gamma \tau_R,\quad \forall R\in\RR,
\end{equation}
where $\tau_Rf(x_1,x_2,x_3)=f(x_1,x_2,x_3-R)$ denotes the translation operator by the vector $Re_3$. Again, states satisfying the symmetry~\eqref{eq:1D-symmetry} have densities that depend only  on $x_1$ and $x_2$ and the trace per unit length  of $\gamma$ can be defined by  $\underline{\Tr} (\gamma)=\int_{\RR^2}\rho_\gamma$. In the reduced Hartree-Fock model, the energy of the system is 
\begin{equation}\label{eq:3D}
\underline{\cE}(\gamma)=\frac12\underline{\Tr} \bra{-\Delta_3 \gamma}+ \tilde{\cD}_2(\rho_\gamma-\mu), 
\end{equation}
where $\tilde{\cD}_2$ is the 2D Hartree interaction. Formally, the interaction term $\cD_2$ is defined by
\begin{equation}\label{eq:2D-hartree}
\tilde{\cD}_2(f,g)=-2\int_{\RR^2\times \RR^2}{f(x)}\log(\av{x-y})g(y). 
\end{equation}
This definition is only valid for functions that decay fast enough at infinity. Besides, it is unclear whether this bilinear form $\tilde{\cD}_2$ is bounded from below or not.
One key result of this paper is a regularization of this expression adapted to neutral systems; and a definition of the corresponding mean-field potential (see Section~\ref{sec:Hartree}). 

For the kinetic part, following the same arguments as in~\cite{GLM21}, 
we prove that this problem is equivalent to a two-dimensional problem,
where the three-dimensional kinetic energy $\frac12\underline{\Tr} (-\Delta_3\gamma)$
is replaced by a two-dimensional kinetic energy
of the form
$$
\frac12\Tr(-\Delta_2 G)+\frac{\pi}{6} \Tr(G^3), 
$$
defined on $\cG:=\{ G\in \bS(L^2(\R^2)) : G\ge0,\; \Tr(G)<\ii\} $. The reduced 2D rHF energy 
then reads
\begin{equation}\label{eq:1D}
\cE(G)=\frac12\Tr(-\Delta_1 G)+ \frac{\pi}{6}\Tr(G^3)+ \cD_2(\rho_G-\mu). 
\end{equation}
We show that this problem of minimizing the energy \eqref{eq:1D} over a suitable set of admissible states admits a unique minimizer, called ground state. We also derive the self-consistent equation that this ground state needs to satisfy.  

Finally, we perform numerical simulations on the simpler Thomas--Fermi model, where the kinetic energy is expressed as an explicit functional of the density $\int \rho^{5/3}$. This allows us to exhibit the properties of the regularized 2D Hartree interaction with respect to the classical expression~\eqref{eq:2D-hartree}.

\medskip
This article is organized as follow. Section~\ref{sec:Hartree} is devoted to the regularization of the 2D Hartree interaction and the definition of the corresponding mean-field potential. In Section~\ref{sec:kin-ene}, we show how to reduce the 3D problem~\eqref{eq:3D} into the 2D model~\eqref{eq:1D}, that we study in Section~\ref{sec:reduced}. Finally, in Section~\ref{sec:num},  we consider the orbital free Thomas-Fermi model as  a semi--classical limit of the rHF model in order to perform elementary numerical illustrations.\\
\linebreak 
\textbf{Acknowledgments.} The research leading to these results has received funding from OCP grant AS70 ``Towards phosphorene based materials and devices''. S. Lahbabi thanks the CEREMADE for hosting her during the final writing of this article.

\section{2D Hartree interaction}\label{sec:Hartree}
For 1D crystals, the Hartree interaction kernel has been derived in~\cite{BlancLeBr00} by means of thermodynamic limit procedure. It is given by 
$$
G(x)=-2\pi \log(\av{(x_1,x_2)})+\sum_{k\in \ZZ}\bra{\frac{1}{\av{x-ke_3}}-\int_{-1/2}^{1/2}\frac{dt}{\av{x-(y,k)e_3}}},\quad x\in\R^3,
$$
and the Hartree interaction energy is formally given by
$$
\cD_\per(f,g)=\int_{\Gamma\times \Gamma} f(x)G(x-y)g(y),
$$
where $\Gamma=\RR^2\times \com{-1/2,1/2}$. 
For functions $f,g$ which do not depend on the third variable $x_3$, this interaction reduces to 
\begin{equation}\label{eq:hartree-non-reg}
\cD_2(f,g)=-2\int_{\RR^2\times \RR^2}{f(x)}\log(\av{x-y})g(y).
\end{equation}
The integral in~\eqref{eq:hartree-non-reg} is only valid for functions decaying fast enough and the map $f\mapsto \cD_2(f)$ is not convex in general. 
The aim of this section is to give a regularization of this expression suitable for neutral charge distributions, which, in our context, will be $f=\rho -\mu$.  We therefore adopt the following definition 
\begin{equation}
	\cD(f,g) := 4\pi \int_{\R^2} \dfrac{\overline{\cF_2(f)(k)}\cF_2(g)(k)}{|k|^2} \rd k,
\end{equation} 
where $f$ and $g$ belong to the following Coulomb space
\[
\cC := \left\{  f\in L^1(\R^2),\; k\mapsto\frac{\cF_2(f)(k)}{|k|} \in L^2(\R^2)  \right\}. 
\]
We have denoted by $\cF_2$ the Fourier transform in $\RR^2$ defined by $\cF_2f(k)=\frac{1}{2\pi}\int_{\RR^2}f(x)e^{-ikx}dx$. We note  that the integrability of $k\mapsto\frac{|\cF_2(f)(k)|^2}{|k|^2}$ over $\R^2$ implies that 
$$\cF_2(f)(0)=\int_{\R^2} f(x) \rd x =0,$$
which means that elements of $\cC$ are neutral distributions. Conversely, for neutral distributions that decay fast enough, $\cD_2$ and $\cD$ coincide.

\begin{lemma}
	If $f\in L^1(\RR^2)$ is such that $\int f=0$ and $\log(1+\av{\cdot})f\in L^1(\RR^2)$, then 
	$$
	\cD(f)=\cD_2(f). 
	$$
\end{lemma}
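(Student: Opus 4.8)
The plan is to recover the identity by a Riesz-potential regularization of the logarithm. The elementary limit
$$
\log\av{z}=\lim_{s\to0^+}\frac{1-\av{z}^{-s}}{s},\qquad z\in\RR^2\setminus\set{0},
$$
suggests replacing the kernel $\log\av{x-y}$ by $\av{x-y}^{-s}$ for $0<s<2$, whose Fourier transform is explicit. First I would invoke the neutrality assumption $\int f=0$: since $\int_{\RR^2\times\RR^2}f(x)\,\tfrac1s\,f(y)\,\rd x\,\rd y=\tfrac1s\bra{\int f}^2=0$, the constant term in the regularization drops out and, granting that the limit may be exchanged with the integral,
$$
\cD_2(f)=-2\int_{\RR^2\times\RR^2}f(x)\log\av{x-y}\,f(y)\,\rd x\,\rd y
=2\lim_{s\to0^+}\frac1s\int_{\RR^2\times\RR^2}f(x)\,\av{x-y}^{-s}\,f(y)\,\rd x\,\rd y .
$$

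For each fixed $s\in(0,2)$ the inner quadratic form is handled in Fourier variables. Writing it as $\ps{f,\av{\cdot}^{-s}*f}$ and inserting the classical Riesz formula $\cF_2(\av{\cdot}^{-s})(k)=c(s)\,\av{k}^{s-2}$, with $c(s)=2^{1-s}\Gamma(1-\tfrac s2)/\Gamma(\tfrac s2)$, together with the convolution identity $\cF_2(\av{\cdot}^{-s}*f)=2\pi\,\cF_2(\av{\cdot}^{-s})\,\cF_2(f)$ attached to the present normalization, Parseval gives
$$
\int_{\RR^2\times\RR^2}f(x)\,\av{x-y}^{-s}\,f(y)\,\rd x\,\rd y
=2\pi\,c(s)\int_{\RR^2}\frac{\av{\cF_2(f)(k)}^2}{\av{k}^{2-s}}\,\rd k .
$$
Since $\Gamma(\tfrac s2)\sim\tfrac2s$ as $s\to0^+$ one has $c(s)/s\to1$; and because $\av{k}^{s-2}$ increases monotonically to $\av{k}^{-2}$ on each of $\set{\av{k}<1}$ and $\set{\av{k}>1}$, monotone convergence yields $\int\av{\cF_2 f}^2\av{k}^{s-2}\rd k\to\int\av{\cF_2 f}^2\av{k}^{-2}\rd k$ in $[0,+\infty]$. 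Multiplying these two limits turns the right-hand side above, divided by $s$, into $2\pi\int\av{\cF_2 f}^2\av{k}^{-2}\rd k$, whence $\cD_2(f)=4\pi\int\av{\cF_2 f}^2\av{k}^{-2}\rd k=\cD(f)$, the identity to be proved.

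The delicate point, which consumes both hypotheses, is the exchange of $\lim_{s\to0^+}$ with the real–space double integral in the first display — the more so as $f$ is signed and the kernel itself changes sign. A direct computation gives $\frac{\partial}{\partial s}\frac{1-a^{-s}}{s}<0$ for every $a>0$, so that $\frac{1-\av{z}^{-s}}{s}$ increases monotonically to $\log\av{z}$ as $s\downarrow0$; writing $f=f_+-f_-$ then reduces the exchange to four quadratic forms in the nonnegative measures $f_\pm\otimes f_\pm$, to which monotone convergence can be applied, the identity being read in $[0,+\infty]$ so that no $\infty-\infty$ cancellation arises. At large separation the far field is controlled by $\big|\log\av{x-y}\big|\lesssim\log(1+\av x)+\log(1+\av y)+C$, so that the assumption $\log(1+\av\cdot)f\in L^1$ together with $f\in L^1$ makes these contributions finite, while the behaviour at the Fourier origin is tamed by $\cF_2 f(0)=\int f=0$. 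I expect the main work to lie precisely in organizing this limit cleanly; an equivalent and perhaps more transparent route is to prove the identity first for Schwartz functions of vanishing integral — where $\cF_2(\log\av\cdot)=-\av{k}^{-2}$ holds as a tempered distribution modulo terms supported at the origin, which $\cF_2 f(0)=0$ annihilates — and then to pass to a general $f$ by density in the norm $\norm{f}_{L^1}+\norm{\log(1+\av\cdot)f}_{L^1}$.
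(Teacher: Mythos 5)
The paper itself gives no proof of this lemma (it defers entirely to \cite{FreMulTho22}), so your argument has to stand alone, and it contains a genuine gap at exactly the step you single out as delicate: the representation
\[
\cD_2(f)\;=\;2\lim_{s\to0^+}\frac1s\iint_{\RR^2\times\RR^2}f(x)\,\av{x-y}^{-s}f(y)\,\rd x\,\rd y .
\]
You justify the exchange of limit and integral by monotone convergence applied, after writing $f=f_+-f_-$, to the four pairings of the kernel $\frac{1-\av{x-y}^{-s}}{s}$ with $f_\pm\otimes f_\pm$. But this kernel changes sign (it is negative for $\av{x-y}<1$), so the increasing form of monotone convergence requires the negative parts to be integrable for some $s_0>0$, i.e. finiteness of the near--diagonal Riesz energy $\iint_{\av{x-y}<1}\av{x-y}^{-s_0}f_\pm(x)f_\pm(y)\,\rd x\,\rd y$ — and this is \emph{strictly stronger} than anything the hypotheses provide. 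Concretely, take $f=f_0-g$ with
\[
f_0(x)=\frac{\1_{\{\av{x}<1/2\}}}{\av{x}^2\log^2(1/\av{x})}\,,
\]
and $g$ a smooth bump supported in $\set{1\le\av{x}\le2}$ with $\int g=\int f_0$. Then $f\in L^1(\RR^2)$, $\int f=0$, $\log(1+\av{\cdot})f\in L^1(\RR^2)$, and $\iint\av{f(x)}\av{\log\av{x-y}}\av{f(y)}<\ii$, because the mass of $f_0$ at scale $r$ is only $O(1/\log(1/r))$, which compensates a logarithmic singularity; so $\cD_2(f)$ is finite and absolutely convergent. But a power $\av{x-y}^{-s}$ is \emph{not} compensated: restricting to $\av{y}<\av{x}/2$ and substituting $u=\log(1/\av{x})$ gives, for every $s>0$,
\[
\iint f_0(x)\,\av{x-y}^{-s}f_0(y)\,\rd x\,\rd y\;\gtrsim\;\int_{\log 2}^{\ii}\frac{\re^{su}}{u^{3}}\,\rd u\;=\;+\ii .
\]
For this admissible $f$ your first display therefore asserts that a finite number equals $+\ii$: the representation is false under the lemma's hypotheses, not merely unjustified. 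There is no rescue by domination either (near the diagonal $\frac{\av{z}^{-s}-1}{s}$ is exponentially larger than $\log\frac1{\av{z}}$, so no $s$-uniform integrable majorant exists), and since the four pairings enter with opposite signs and can be infinite, one gets precisely the $\ii-\ii$ cancellations your parenthetical ``read in $[0,+\ii]$'' claims to exclude.

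Your fallback — prove the identity for Schwartz functions of zero mean, then extend ``by density in $\norm{f}_{L^1}+\norm{\log(1+\av{\cdot})f}_{L^1}$'' — does not close the gap: neither $\cD_2$ nor $\cD$ is continuous with respect to that norm (it controls neither the diagonal singularity of the logarithmic kernel nor the weight $\av{k}^{-2}$ near $k=0$), and equality of two quadratic functionals on a dense set does not pass to the limit without uniform bounds you do not have. (The fixed-$s$ Parseval step, performed with $f$ merely in $L^1$, is also only formal, but that is a secondary issue.) The structural obstruction is that finiteness of the logarithmic energy does not imply finiteness of \emph{any} Riesz $s$-energy, so no proof routed through the kernels $\av{x-y}^{-s}$ can work at the stated level of generality; a correct argument must handle the logarithmic kernel directly, splitting it into short- and long-range parts controlled by the $L^1$ and $\log$-moment hypotheses and using neutrality on the Fourier side near $k=0$, which is the content of the result the paper cites from \cite{FreMulTho22}.
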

\noindent
The proof of this lemma can be found in \cite{FreMulTho22}. 

\medskip\noindent
For $f\in \cC$, let us introduce $\rW_f $ as
\[  \rW_f := \cF_2^{-1}\left(k\mapsto\frac{\cF_2(f)(k)}{|k|}\right)= (-\Delta_2)^{-\frac{1}{2}} f\quad \in L^2(\R^2). 
\]
Thanks to Parseval identity, one can rewrite the interaction $\cD$ as 
\[  \cD(f,g) =  4\pi \int_{\R^2} \overline{\rW_f (x)}{W_g(x)} \rd x,\qquad\forall f,g\in \cC.
\]

\medskip\noindent
An explicit expression of $\rW_f$, for $f\in\cC$, is given through the Riesz potential as follows (see for instance \cite[Chapter V]{Stein19}). 
\begin{lemma}
	Let $f\in \cC$. Then,
	\begin{equation}\label{eq Wf=Riesz pot}
		\rW_f (x) = \frac{1}{2\pi}\int_{\R^2}\frac{f(y)}{|x-y|} \rd y.
	\end{equation}
\end{lemma}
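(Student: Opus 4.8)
The statement is the classical identification of $\rW_f=(-\Delta_2)^{-1/2}f$ with the Riesz potential of order $1$ in dimension $2$, so the real content is the computation of the normalizing constant together with a justification that the Fourier-multiplier definition of $\rW_f$ and the convolution formula agree. My plan is to prove that the Riesz kernel $K(x):=\frac{1}{2\pi|x|}$ has Fourier transform $\cF_2 K(k)=\frac{1}{2\pi|k|}$, and then read off the result from the convolution theorem. Indeed, with the unitary convention $\cF_2 f(k)=\frac1{2\pi}\int_{\R^2}f(x)e^{-ikx}\,dx$ one has $\cF_2(K*f)=2\pi\,\cF_2K\,\cF_2f$, so that $\cF_2(K*f)(k)=\frac{\cF_2 f(k)}{|k|}=\cF_2(\rW_f)(k)$, and injectivity of $\cF_2$ then yields $\rW_f=K*f$, which is exactly~\eqref{eq Wf=Riesz pot}.

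To compute the kernel explicitly, and in particular to fix the constant $\frac{1}{2\pi}$, I would avoid manipulating the distributional Fourier transform of $|x|^{-1}$ directly and instead use subordination of the heat semigroup. Writing, for $\lambda>0$, $\lambda^{-1/2}=\frac{1}{\Gamma(1/2)}\int_0^\infty t^{-1/2}e^{-t\lambda}\,dt$ and applying it to $\lambda=-\Delta_2$ gives $(-\Delta_2)^{-1/2}=\frac{1}{\sqrt\pi}\int_0^\infty t^{-1/2}e^{t\Delta_2}\,dt$. Inserting the two-dimensional heat kernel $e^{t\Delta_2}(x)=\frac{1}{4\pi t}e^{-|x|^2/4t}$ and evaluating the $t$-integral by the substitution $u=|x|^2/4t$ reduces everything to $\Gamma(1/2)=\sqrt\pi$ and produces precisely $K(x)=\frac{1}{2\pi|x|}$. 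This simultaneously identifies the kernel and, through the first paragraph, shows $\cF_2 K=\frac{1}{2\pi|k|}$.

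The step that genuinely requires care, and which I expect to be the main obstacle, is the justification of the convolution theorem for $f\in\cC$, since $K$ lies in neither $L^1$ nor $L^2(\R^2)$ (it is only locally integrable, with a slowly decaying tail) while $f$ is only assumed to be in $L^1$ with $\cF_2 f/|k|\in L^2$. I would handle this by first establishing the identity for Schwartz functions $f$ with $\int f=0$, where all the manipulations (Fubini, the convolution theorem, the subordination integral) are legitimate, and then extending to general $f\in\cC$ by a density and continuity argument: the right-hand side of~\eqref{eq Wf=Riesz pot} is controlled by splitting the integral into the region $|x-y|<1$, where local integrability of $|x-y|^{-1}$ together with Hölder bounds it, and $|x-y|\ge1$, where $|x-y|^{-1}\le1$ bounds it by $\|f\|_{L^1}$; the neutrality $\int f=0$ then ensures the expected decay of $K*f$ at infinity, upon rewriting $K*f(x)=\frac1{2\pi}\int(|x-y|^{-1}-|x|^{-1})f(y)\,dy$, and matches it with the $L^2$ object $\rW_f$. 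Alternatively, one may simply invoke the distributional Riesz-potential theory of~\cite[Chapter V]{Stein19}, where the Fourier transform of $|x|^{-1}$ on $\R^2$ is computed in the sense of tempered distributions, and specialize the normalization to $d=2$, $\alpha=1$, for which the constant $\gamma(\alpha)=\pi^{d/2}2^\alpha\Gamma(\alpha/2)/\Gamma((d-\alpha)/2)$ equals $2\pi$, consistently with the heat-kernel computation above.
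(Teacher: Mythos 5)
Your argument is correct, but note that the paper does not actually prove this lemma: it states it with a pointer to \cite[Chapter V]{Stein19}, which is exactly the alternative you offer in your closing sentence. Your main route is therefore genuinely different in that it is self-contained: you identify the kernel of $(-\Delta_2)^{-1/2}$ via heat-kernel subordination, $\lambda^{-1/2}=\Gamma(1/2)^{-1}\int_0^\infty t^{-1/2}e^{-t\lambda}\,\rd t$, obtaining $K(x)=\frac{1}{2\pi|x|}$, and you fix all normalizations explicitly (your checks are right: with the convention $\cF_2 f(k)=\frac{1}{2\pi}\int f(x)e^{-ikx}\rd x$ one has $\cF_2(f*g)=2\pi\,\cF_2 f\,\cF_2 g$ and $\cF_2(|x|^{-1})=|k|^{-1}$, and Stein's constant $\gamma(\alpha)=\pi^{d/2}2^\alpha\Gamma(\alpha/2)/\Gamma((d-\alpha)/2)$ indeed equals $2\pi$ at $d=2$, $\alpha=1$). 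What the citation buys the paper is brevity plus the a.e.\ absolute-convergence statement for $f\in L^p(\R^2)$, $1\le p<2$, which it quotes immediately after the lemma; what your argument buys is transparency about where the constant $\frac{1}{2\pi}$ comes from. Two details would need tightening in a full write-up. First, for $f$ merely in $L^1(\R^2)$, H\"older does not bound the local piece $\int_{|x-y|<1}|x-y|^{-1}|f(y)|\,\rd y$ pointwise; the correct statement is that this quantity is finite for a.e.\ $x$ and locally integrable in $x$, by Fubini--Tonelli, which is all you need. Second, your density step requires zero-mean Schwartz functions $f_n$ with both $f_n\to f$ in $L^1(\R^2)$ and $\cF_2(f_n)/|\cdot|\to\cF_2(f)/|\cdot|$ in $L^2(\R^2)$; this is the same (unproved) approximation the paper itself invokes when extending \eqref{energy_by_potential} from $\cS(\R^2)$ to $\cC$, so you are consistent with its level of rigor, and granting it, the tail part of $K*(f_n-f)$ tends to $0$ uniformly in $x$, the local part tends to $0$ in $L^1_{\loc}$, hence $K*f_n\to K*f$ in the sense of distributions, while $\rW_{f_n}\to\rW_f$ in $L^2(\R^2)$, and the two limits coincide a.e., completing your argument.
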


The integral in \eqref{eq Wf=Riesz pot} is absolutely convergent whenever $f\in L^p(\R^2)$, for some $1\le p<2$ (see for instance \cite[Theorem 1 of Chapter V]{Stein19}). Furthermore, if $p\in(1,2)$ and $q=\frac{2p}{2-p}$. Then,
	\begin{equation}\label{eq |W_f|_q < |f|_p }
		\| \rW_f \|_q \le C \|f\|_p,\qquad \forall f \in L^p(\R^2),
	\end{equation} 
	for a suitable constant $C$ which is independent of $f$.


\medskip\noindent
We turn now to the definition of a \emph{mean-field potential} $\rV_f$  associated to a charge distribution $f\in\cC$. It needs to satisfy 
$$
\cD(f,g) = \int_{\R^2} \rV_f (x) g(x)\rd x,\qquad \forall g\in\cC
$$
and 
$$
 -\Delta \rV_f = 4\pi f 
$$
in a weak sense. One would consider $\rV_f=4\pi (-\Delta)^{-1}f = 4\pi W_{W_f}$. However,  the integral in \eqref{eq Wf=Riesz pot} is not necessarily convergent in the critical case of $p=2$. We suggest an alternative way to define the potential $\rV_f$, for $f\in\cC$ through a modification of the {Riesz potential} suggested in \cite{Kuro88}. Let us define the following two functions
\[ q_{1}(x) = |x|\left(1+\log^+\left(\frac{1}{|x|}\right)\right)\quad \text{and}\quad  q_{2}(x) = |x|\left(1+\log^+\left(|x|\right)\right),
\]
where $$\log^+(|x|):=\log(|x|)\1_{\{|x|\ge1\}}=\begin{cases} \log(|x|)\quad\text{if}\quad |x|\ge 1 \\ 0 \quad\text{else} \end{cases}.$$
In the following proposition, we define and state some properties of the modified Riesz potential. For the proof and further details, we refer to \cite[Corollary 3.23]{Kuro88}.
\begin{proposition}\label{prop U^f well defined}
	For $g\in L^2(\R^2)$, define
	\begin{equation} \label{eq modified Riesz transf}
		\mathcal{I}_{1}^{g}(x) := \frac{1}{2\pi}\int_{|y|<1}^{} \frac{g(y)}{|x-y|} \; \rd y \quad \text{and} \quad \mathcal{I}_{2}^{g}(x) := \frac{1}{2\pi}\int_{|y| \ge 1}^{} \left( \frac{1}{|x-y|} - \frac{1}{|y|} \right) g(y) \; \rd y.
	\end{equation}
	One has the following
	\begin{enumerate}[a)]
		\item $\cI_1^g$ and $\cI_2^g$ are well-defined, i.e., the integrals in \eqref{eq modified Riesz transf} are convergent.
		\item There exist two real numbers $C_{1}$ and $C_{2}$ such that
		\begin{align}
			\left\|  \frac{\mathcal{I}_{1}^{g} }{q_{1}} \right\|_{L^{2}} \;\; \le\; C_{1} \left\|  g\right\| _{L^{2}} \quad and \quad \left\| \frac{\mathcal{I}_{2}^{g} }{q_{2}}\right\| _{L^{2}} \;\; \le\; C_{2} \left\| g\right\|	_{L^{2}}.
			\label{eq L2 Estimate for U/q}
		\end{align}
	\end{enumerate}
\end{proposition}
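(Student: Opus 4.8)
This proposition is the endpoint (critical) case of the mapping properties of the Riesz potential of order $1$ in dimension $2$: the Sobolev embedding $\dot H^1(\R^2)\hookrightarrow L^\ii$ just fails, and its failure is logarithmic, which is exactly why the weights $q_1,q_2$ carry a logarithmic correction. The plan is to establish the two assertions in turn, reducing part (b) to the $L^2$-boundedness of the two integral operators $T_i\colon g\mapsto \mathcal I_i^g/q_i$, $i=1,2$.

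For part (a) I would argue by elementary size estimates. For $\mathcal I_1^g$, fix $x$ and split $\{\av y<1\}$ into $\{\av{x-y}<1\}$ and $\{\av{x-y}\ge1\}$. On the second piece $\av{x-y}^{-1}\le1$, and Cauchy--Schwarz over the unit ball gives a finite contribution; on the first piece one uses $\int_{\av{x-y}<1}\av{x-y}^{-1}\,dx=2\pi$ together with Tonelli's theorem to see that $x\mapsto\int_{\av y<1}\av{g(y)}\av{x-y}^{-1}\,dy$ is in $L^1_{\loc}$, hence finite for a.e.\ $x$. For $\mathcal I_2^g$ the only genuinely new point is the decay as $\av y\to\ii$, and this is supplied by the subtraction: the reverse triangle inequality gives $\big|\av{x-y}^{-1}-\av y^{-1}\big|\le \av x\,\av{x-y}^{-1}\av y^{-1}\lesssim \av x\,\av y^{-2}$ once $\av y\ge2\av x$, so the far part of the integral is controlled by $\av x\,\norm g_{L^2}(\int_{\av y\ge1}\av y^{-4}\,dy)^{1/2}<\ii$; on the remaining bounded region one bounds the difference by $\av{x-y}^{-1}+1$ and argues as before.

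For part (b) I would first dispose of the regions where the weight carries no logarithm. For $\mathcal I_1^g$ this is $\{\av x\ge1\}$, where $q_1(x)=\av x$: since the source is supported in the unit ball, $\av{x-y}\approx\av x$ for $\av x\ge2$ and $\av{\mathcal I_1^g(x)}\lesssim\av x^{-1}\norm g_{L^2}$, so that $\mathcal I_1^g/q_1\lesssim\av x^{-2}\norm g_{L^2}$ is square integrable there, the annulus $1\le\av x\le2$ being immediate. For $\mathcal I_2^g$ the analogous easy region is $\{\av x<1\}$: the cancellation forces $\mathcal I_2^g(0)=0$ and $\av{\mathcal I_2^g(x)}\lesssim\av x\,\norm g_{L^2}$, so $\mathcal I_2^g/q_2$ is bounded there. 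What remains, and is the heart of the matter, is the complementary region, where the logarithm of the weight is active and, crucially, its zero---at the origin for $q_1$, at infinity for $q_2$---coincides in scale with the singularity of the kernel.

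I would treat this crux by a dyadic decomposition. For $\mathcal I_1$, set $A_j=\{2^{-j}\le\av x<2^{-j+1}\}$, so that $q_1\approx 2^{-j}(j+1)$ on $A_j$, and split the source as $g=\sum_k g\1_{A_k}$. Estimating the scale-to-scale contributions $\int_{A_k}g(y)\av{x-y}^{-1}\,dy$ (a gain of one length power at equal scales, and $\av{x-y}\approx 2^{-\min(j,k)}$ off the diagonal) reduces $\norm{\mathcal I_1^g/q_1}_{L^2}^2$ to a sum in the sequence $a_k:=\norm{g\1_{A_k}}_{L^2}\in\ell^2$. The decisive point is that the factor $(j+1)^{-2}$ produced by $q_1^{-2}$ turns the leading piece $\sum_j (j+1)^{-2}(\sum_{k\le j}a_k)^2$ into a convergent sum through the discrete Hardy inequality $\sum_j j^{-2}(\sum_{k\le j}a_k)^2\le 4\sum_j a_j^2$; the tail terms $k>j$ carry an extra $2^{-k}$ and are summed directly by Cauchy--Schwarz. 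The region near infinity for $\mathcal I_2$ is handled identically, either after the inversion $x\mapsto x/\av x^2$ or with dyadic annuli at infinity, the growth of $q_2$ again supplying the $(j+1)^{-2}$ that drives Hardy's inequality. The main obstacle is precisely this endpoint: without the logarithmic weight the scale sum degenerates to the divergent $\sum_j(\sum_{k\le j}a_k)^2$, mirroring the failure of $\dot H^1(\R^2)\hookrightarrow L^\ii$. For the sharp constants $C_1,C_2$ and a fully quantitative argument one may instead invoke Kurokawa's study of the modified Riesz potential, \cite[Corollary 3.23]{Kuro88}.
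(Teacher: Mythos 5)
Your proposal is correct in substance, but it necessarily takes a different route from the paper, because the paper does not prove this proposition at all: it is stated as imported material, with the proof deferred entirely to Kurokawa \cite[Corollary 3.23]{Kuro88} (so your closing sentence in effect reproduces the paper's entire argument). What you add is a self-contained proof: part (a) via local integrability of the kernel plus the subtraction-induced decay $\big|\av{x-y}^{-1}-\av{y}^{-1}\big|\lesssim \av{x}\,\av{y}^{-2}$ at infinity, and part (b) by isolating the critical regions ($\av x<1$ for $\cI_1$, $\av x\ge 1$ for $\cI_2$), decomposing source and observation points into dyadic annuli, and invoking the discrete Hardy inequality $\sum_j j^{-2}\big(\sum_{k\le j}a_k\big)^2\le 4\sum_j a_j^2$ to absorb the non-decaying far-field contributions. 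This is the right mechanism: the factor $(j+1)^{-2}$ produced by the logarithmic correction in $q_1,q_2$ is exactly what makes the scale sum converge, and your observation that without it the sum degenerates to the divergent $\sum_j\big(\sum_{k\le j}a_k\big)^2$ correctly locates the endpoint failure. Your route makes the paper self-contained and elementary; what the citation buys instead is a sharper and more general family of weighted estimates without the dyadic bookkeeping.

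One intermediate claim is wrong as stated, though the conclusion it serves survives. For $\cI_2$ on $\{\av x<1\}$ you assert the pointwise bound $\av{\cI_2^g(x)}\lesssim \av x\,\norm{g}_{L^2}$, hence that $\cI_2^g/q_2$ is bounded there. This fails near the unit circle: as $\av x\to 1^-$ the kernel singularity at $y\approx x$ meets the source region $\{\av y\ge 1\}$, and Cauchy--Schwarz only gives
\begin{equation*}
\frac{\av{\cI_2^g(x)}}{\av x}\;\lesssim\;\norm{g}_{L^2}\left(1+\log\frac{1}{1-\av x}\right)^{1/2},
\end{equation*}
which is unbounded. The estimate you actually need is nevertheless true: since $\log\frac{1}{1-\av x}$ is integrable over the unit disk, this weaker bound already yields $\norm{\cI_2^g/q_2}_{L^2(\av x<1)}\lesssim\norm{g}_{L^2}$; alternatively, split $\{\av y\ge 1\}$ into $\{1\le \av y\le 2\}$, handled by Young's inequality exactly as your annulus $1\le\av x\le 2$ for $\cI_1$, and $\{\av y>2\}$, where your pointwise bound is valid. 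The same caveat applies harmlessly elsewhere: what part (b) requires, and what your argument delivers, is square-integrability of the weighted quantities, not pointwise boundedness.
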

\noindent
As a consequence of Proposition~\ref{prop U^f well defined}, we have $\mathcal{I}_i^f\in L^1_\loc(\RR^2)$, for $i\in \set{1,2}$. Now, for $f\in\cC$, we define the mean field potential as
\begin{equation}\label{eq V=U1+U2}
	\rV_f : =  4\pi \left(\mathcal{I}_{1}^{\rW_f} + \mathcal{I}_{2}^{\rW_f} \right).
\end{equation}
The following theorem is one of the main results of the paper and it allows to rewrite the potential energy $\cD$ in terms of the mean--field potential $V_f$. 

\begin{theorem}\label{thm. pot energy = int pot. density} 
Let $f\in \cC$. Then 
\begin{equation}\label{energy_by_potential}
	\mathcal{D}(f,h) = \int_{\mathbb{R}^{2}} \rV_{f}(x)\, h(x) \rd x\qquad \forall h \in \mathcal{C},				
\end{equation}
			and  $\rV_{f}$ satisfies 
			\begin{equation}\label{eq:laplace}
			-\Delta\rV_{f} =4 \pi f  
		\end{equation} 
			in a distributional sense.
\end{theorem}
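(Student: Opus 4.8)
The plan is to prove the two assertions of the theorem by first establishing the distributional Poisson equation~\eqref{eq:laplace}, and then using it together with the definition of $\rW_f$ to derive the reciprocity formula~\eqref{energy_by_potential}. I would start from the definition~\eqref{eq V=U1+U2}, $\rV_f = 4\pi(\cI_1^{\rW_f} + \cI_2^{\rW_f})$, and recognize that both pieces are modifications of the Riesz kernel $\tfrac{1}{2\pi}\tfrac{1}{|x|}$, which is (up to the normalizing constant) the fundamental solution of $-\Delta_2$ in the sense that $-\Delta_2\bigl(\tfrac{1}{2\pi|x|}\bigr)=\delta_0$ as distributions on $\R^2$ (this is the three-dimensional Newtonian kernel restricted to functions independent of $x_3$, which is exactly why the factor $2\pi$ and the $1/|x|$ decay appear rather than a logarithm). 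The subtracted term $\tfrac{1}{|y|}\1_{\{|y|\ge1\}}$ in $\cI_2^{\rW_f}$ is harmonic away from the origin and, crucially, does not depend on $x$, so it is annihilated by $\Delta_x$; hence applying $-\Delta$ to $\rV_f$ formally returns $4\pi\,\rW_f\cdot(-\Delta)(\text{Riesz kernel})$ convolved appropriately, giving $-\Delta\rV_f = 4\pi\, (-\Delta_2)^{1/2}\rW_f = 4\pi f$ since $\rW_f = (-\Delta_2)^{-1/2}f$ by definition. To make this rigorous I would test against $\varphi\in C_c^\infty(\R^2)$, use Proposition~\ref{prop U^f well defined} to justify that $\cI_i^{\rW_f}\in L^1_\loc$ so the pairing $\langle \rV_f, -\Delta\varphi\rangle$ makes sense, and then move the Laplacian onto the kernel via Fubini, the key computation being $\int_{\R^2}\tfrac{1}{2\pi|x-y|}(-\Delta\varphi)(x)\,\rd x = \varphi(y)$.

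For the reciprocity identity~\eqref{energy_by_potential}, I would work on the Fourier side where $\cD$ is defined. The cleanest route is to verify the identity first for a dense and well-behaved subclass of $h\in\cC$ — say $h\in\cC$ with $\widehat h$ compactly supported away from the origin, or Schwartz functions of integral zero with enough decay — where the Riesz potential $\rW_h$ is given by the convergent integral~\eqref{eq Wf=Riesz pot} and all manipulations are licit. On this subclass I would use the Parseval form $\cD(f,h)=4\pi\int_{\R^2}\overline{\rW_f}\,\rW_h$ established in the excerpt, and then integrate by parts / use the self-adjointness of $(-\Delta_2)^{-1/2}$ to rewrite $4\pi\int \overline{\rW_f}\,\rW_h = 4\pi\int \overline{(-\Delta_2)^{-1}f}\;h$; the point is that $4\pi(-\Delta_2)^{-1}f$ is precisely $\rV_f$ as defined through the modified Riesz potential, because the subtraction of the $x$-independent constant piece in $\cI_2$ only shifts $\rV_f$ by a constant and that constant pairs to zero against $h$ by neutrality $\int h = 0$. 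This neutrality is the mechanism that makes the modification harmless inside the pairing, and it is worth isolating as the main algebraic observation.

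The main obstacle I anticipate is the analytic justification at the critical exponent $p=2$: the Riesz potential $\rW_f = (-\Delta_2)^{-1/2}f$ for $f\in\cC$ need not decay, and the naive double integral defining $4\pi\int\overline{\rW_f}\rW_h$ together with the kernel representation of $\rV_f$ is exactly the borderline case where absolute convergence fails — this is the whole reason the modified potential~\eqref{eq V=U1+U2} was introduced. The delicate step is therefore passing from the dense subclass to general $h\in\cC$ by a limiting/density argument: I would need a continuity estimate controlling $h\mapsto\int \rV_f\,h$ in terms of the $\cC$-norm $\|\,|k|^{-1}\cF_2(h)\,\|_{L^2}$, which I expect to obtain by combining the Parseval representation of $\cD$ (continuous in this norm by Cauchy--Schwarz) with the $L^2$-type estimates~\eqref{eq L2 Estimate for U/q} from Proposition~\ref{prop U^f well defined}, matching the growth of $\cI_i^{\rW_f}$ against the decay one must impose on the approximating $h$. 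Showing that both sides of~\eqref{energy_by_potential} are continuous in the same topology and agree on a dense set then closes the argument; verifying that $\rV_f$ is genuinely independent of the arbitrary split at radius $|y|=1$ (the boundary between $\cI_1$ and $\cI_2$) modulo an additive constant is a smaller bookkeeping point that the neutrality of $h$ again renders irrelevant in the pairing.
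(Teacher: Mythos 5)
Your overall architecture (verify the pairing on a dense class of neutral test functions, extend by density, let neutrality absorb the modification of the Riesz kernel) is sound and close in spirit to the paper's proof, but two concrete steps on which you rest the argument are false, and both stem from the same confusion: in $\R^2$ the kernel $\frac{1}{2\pi|x|}$ is the kernel of $(-\Delta_2)^{-1/2}$, \emph{not} a fundamental solution of $-\Delta_2$ (that would be $-\frac{1}{2\pi}\log|x|$). In two dimensions $|x|^{-1}$ is not even harmonic away from the origin: $\Delta\,|x|^{-1}=|x|^{-3}$ for $x\neq 0$ (harmonicity of $|x|^{-1}$ holds only in dimension $3$). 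Consequently your ``key computation''
\begin{equation*}
\int_{\R^2}\frac{1}{2\pi|x-y|}\,(-\Delta\varphi)(x)\,\rd x=\varphi(y)
\end{equation*}
is wrong; the left-hand side equals $\bigl((-\Delta_2)^{1/2}\varphi\bigr)(y)$, so your direct distributional proof of \eqref{eq:laplace} collapses. The Poisson equation here is genuinely a statement about the composition of two half-inverses, $\rV_f\sim 4\pi(-\Delta_2)^{-1/2}\rW_f$ with $\rW_f=(-\Delta_2)^{-1/2}f$, and is not accessible by a local fundamental-solution argument. The paper avoids this entirely by proving \eqref{energy_by_potential} \emph{first} and then deducing \eqref{eq:laplace} from it: for $\phi\in C_c^\infty(\R^2)$ the function $h:=-\Delta\phi$ is neutral and lies in $\cC$ (since $\cF_2(h)/|k|=|k|\cF_2(\phi)\in L^2$), so $(-\Delta\rV_f,\phi)=\int\rV_f\,h=\cD(f,h)=4\pi\int\overline{\cF_2 f}\,\cF_2\phi=\langle 4\pi f,\phi\rangle$. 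You should reverse your order of proof accordingly.

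The second gap is in your verification of \eqref{energy_by_potential} on the dense subclass. You identify $\rV_f$ with $4\pi(-\Delta_2)^{-1}f$ ``up to an additive constant'', arguing that the subtracted term in $\cI_2^{\rW_f}$ is $x$-independent. That splitting is illegal: it would require the two integrals $\int_{|y|\ge1}\frac{\rW_f(y)}{|x-y|}\,\rd y$ and $\int_{|y|\ge1}\frac{\rW_f(y)}{|y|}\,\rd y$ to converge separately, but $\rW_f$ is only in $L^2(\R^2)$ while $|y|^{-1}\1_{\{|y|\ge1\}}\notin L^2(\R^2)$, so neither integral need be finite --- this is exactly the critical case $p=2$ that you yourself flag as the reason the modified potential \eqref{eq V=U1+U2} was introduced, and it invalidates the splitting rather than merely making it delicate; likewise $4\pi(-\Delta_2)^{-1}f$ has no a priori meaning as a function. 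The fix is the paper's: keep the combined kernel $\frac{1}{|x-y|}-\frac{1}{|y|}$ intact, pair $\rV_f$ against a neutral Schwartz $h$, justify Fubini by writing $\cI_j^{\rW_f}h=(q_j^{-1}\cI_j^{\rW_f})(q_j h)\in L^1$ and invoking the weighted bounds \eqref{eq L2 Estimate for U/q}, and only then use neutrality pointwise in $y$, inside the $y$-integral:
\begin{equation*}
\frac{1}{2\pi}\int_{\R^2}\Bigl(\frac{1}{|x-y|}-\frac{1}{|y|}\Bigr)h(x)\,\rd x=\rW_h(y)-\frac{1}{2\pi|y|}\int_{\R^2}h=\rW_h(y),
\end{equation*}
which yields $\int\rV_f\,h=4\pi\int\rW_f\rW_h=\cD(f,h)$. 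The extension to general $h\in\cC$ by density is then exactly as you describe, and that part of your plan matches the paper.
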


\begin{proof}
Let us first prove~\eqref{energy_by_potential} for  $h$ in the Schwartz space of fast decaying functions $\mathcal{S}(\RR^2)$. 
Since
$\cI_j^{\rW_{f}} h = (q_j^{-1}\cI_j^{\rW_{f}})\, q_j h  \in L^1(\R^2)$, for $j\in\{1,2\}$. Then, one can write
\begin{align}\label{eq loc proof}
	\nonumber   \frac{1}{4 \pi}\int_{\R^2}\rV_{f}(x)\,h(x)\,\rd x :=& \int_{\R^2}\mathcal{I}_1^{\rW_{f}}(x) h(x)\, \rd x + \int_{\R^2} \mathcal{I}_{2}^{\rW_{f}}(x) h(x)\, \rd x \\
	\nonumber    =& \frac{1}{2\pi}\int_{\R^2} \int_{\{|y|<1\}} \frac{\rW_{f}(y)}{|x-y|} h(x)\, \rd y \rd x \\
	&+ \frac{1}{2\pi} \int_{\R^2} \int_{\{|y|\ge 1\}} \left(\frac{1}{|x-y|}-\frac{1}{|y|}\right)\rW_{f}(y) h(x)\, \rd y \rd x.
\end{align}
In order to apply Fubini theorem and change the order of the integrals in the above double intgrals, we have to check the absolute integrability. 
Regarding the first term, one can note that $(|q_{1}|)^{-1}\mathcal{I}_{1}^{|\rW_{f}|}$ is in $L^2(\R^2)$, thanks to \eqref{eq L2 Estimate for U/q}, and so is $q_{1}h$. Therefore, one has
\begin{align}
	\nonumber  \frac{1}{2\pi}\int_{\R^2} \int_{\{|y|<1\}} \frac{\rW_{f}(y)}{|x-y|} h(x) \,\rd y \rd x &= \frac{1}{2\pi}\int_{|y|<1} \int_{\mathbb{R}^{2}} \frac{\rW_{f}(y)}{|x-y|} h(x) \,\rd x\rd y  \\
	&= \int_{\{|y|<1\}} \rW_{f}(y) \rW_{h}(y) \;\rd y.
	\label{S1}
\end{align}
Taking into the account that the estimate in \eqref{eq L2 Estimate for U/q} holds true if one substitutes the kernel defining $\mathcal{I}_{2}$ by its absolute value (see for instance \cite[Theorem 3.19]{Kuro88}), then the last double intergal of~\eqref{eq loc proof} is absolutely convergent as well. Thus,
\begin{align}
\nonumber  & \frac{1}{2\pi}\int_{\R^2} \int_{|y|\ge 1} \left(\frac{1}{|x-y|}-\frac{1}{|y|}\right) \rW_{f}(y) h(x) \,\rd y \rd x   \\
\nonumber& \qquad\qquad = \frac{1}{2\pi}\int_{|y|\ge 1} \int_{\R^2} \left(\frac{1}{|x-y|}-\frac{1}{|y|}\right)\rW_{f}(y) h(x) \,\rd x \rd y \\
\nonumber& \qquad\qquad = \int_{|y|\ge 1} \rW_{f}(y) \left( \rW_{h}(y) - \frac{1}{2\pi |y|} \int_{\mathbb{R}^{2}} h \right)\, \rd y  \\
&\qquad\qquad= \int_{|y|\ge 1}^{} \rW_{f}(y) \rW_{h}(y) \,\rd y.
\label{S2}
\end{align}
By adding \eqref{S1} et \eqref{S2}, we obtain
\[
\int_{\mathbb{R}^{2}}^{}\rV_{f}(x)\,h(x)\,dx \: = 4\pi \int_{\mathbb{R}^{2}}^{} \rW_{f}(x) \rW_{h}(x) \,dx.
\]
Now, let $h\in\cC$. Then, $h\in L^1(\R^2)$ and $\rW_h \in L^2(\R^2)$. By density of the Schwartz space $\cS$, one can find $(h_n)_n\subset \cS$ with $\int_{\R^2} h_n \, \rd x=0$ such that $h_n\to h$ in $L^1(\R^2)$ and $\frac{\cF_2(h_n)}{|\cdot|}\to \frac{\cF_2(h)}{|\cdot|}$ in $L^2(\R^2)$. Hence, one obtains \eqref{energy_by_potential}, for $h\in\cC$, by letting $n\to\ii$.

\medskip\noindent
We move now to the proof of~\eqref{eq:laplace}. If one denotes by $(\cdot,\cdot)$ the distributional brakets, then,
for $\phi\in C^\ii_c(\RR^2)$, 
\begin{align*}
	( -\Delta_2 V_f,\phi ) =(  V_f,-\Delta_2\phi)=\int_{\R^2} V_f \Delta_2\phi \,\rd x.
\end{align*}
One knows that $h:=-\Delta \phi$ satisfies $\int h=0$. Thus, by the previous result
\begin{align*}
		( -\Delta_2 V_f,\phi )= \cD(f,h)&= 4\pi \int_{\RR^2}\frac{\overline{\cF_2(f)(k)}\cF_2(h)(k)}{\av{k}^2}\rd k\\
		&=4\pi \int_{\RR^2}\overline{\cF_2(f)(k)}\cF_2(\phi)(k)\rd k=\langle 4\pi f,\phi \rangle.
\end{align*}

\end{proof}
\section{Model reduction}\label{sec:kin-ene}
In this section, we show that the 3D reduced Hartree Fock model for homogeneous 1D crystal is equivalent to a 2D model. Other DFT models can be handled similarly.  We recall that the nuclear charge distribution $\mu\in L^1_\loc(\R^3)$ satisfies
\begin{equation}
\mu\geq 0,\quad \text{and} \quad 	\mu(x_1,x_2,x_3)=\mu(x_1,x_2,x_3-R),\quad \forall R \in\R.
\end{equation} 
We assume that the charge per unit length is finite, that is,
\[ Z:= \int_{\R^2} \mu(x_1,x_2,0)\; \rd x_1 \,\rd x_2 < \ii.
\]
In Hartree--Fock type models, the state of electrons is described by a bounded self--adjoint operator $\gamma$ satisfying  Pauli exclusion principle  $0\le\gamma\le 1$. In our particular case, the system has the following features. 
\begin{itemize}
	\item It is equidistributed in one direction of the three--dimensional space,
	\item it is confined in the remaining two directions with a finite number of particles.
\end{itemize}
The electronic states $\gamma$  share the aforementioned invariance as follows
\begin{equation}\label{eq. intro gamma commute tau-R}
	\gamma\tau_{R}=\tau_{R}\gamma,\qquad\forall R\in\R,
\end{equation}
where $\tau_R$ denotes the translation operator by the vector $Re_3$ ($\tau_R f(x)=f(x_1,x_2,x_3-R)$). 
The 3D rHF energy of a state $\gamma$ in 
\[ \cK := \{  \gamma\in \underline{\fS}(L^2(\R^3)),\;  0\le\gamma\le 1,\;  \tau_{R}\gamma=\gamma\tau_{R},\;\forall R\in\R\}.
\]
is 
$$
\underline{\cE}(\gamma)=\frac12\underline{\Tr} \bra{-\Delta_3 \gamma}+ \cD(\rho_\gamma-\mu),
$$
where we have denoted by  $\underline{\fS}(L^2(\RR^3))$ the space of self--adjoint, locally trace class operators acting on $L^2(\RR^3)$, and $\cD$ is the regularized 2D Hartree interaction introduced in the previous section. We now define the 2D energy functional for a state in 
\begin{equation}\label{eq set of admissible states G}
	\cG := \{   G\in\fS(L^2(\R^2)),\; G\ge 0 
	 \},
\end{equation} 
where we have denoted by  ${\fS}(L^2(\RR^3))$ the space of self--adjoint, trace class operators acting on $L^2(\RR^2)$, by 
\begin{equation}\label{eq:energy-2D}
\cE(G)=\frac12\Tr(-\Delta_1 G)+ \frac{\pi}{6}\Tr(G^3)+ \cD(\rho_G-\mu). 
\end{equation}
The main result of the section is the following theorem. 

\begin{theorem}\label{th:equivalence}
The ground state energies 
$$
\underline{I}=\inf\set{\underline{\cE}(\gamma),\; \gamma\in \cK}
$$	
and
\begin{equation}\label{eq:2D-prob}
I=\inf\set{\cE(G),\; G\in\cG}
\end{equation}
are equal and the minimizers of both energies share the same densities, which does not depend on $x_3$. 
\end{theorem}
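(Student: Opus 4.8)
The plan is to diagonalize the translation symmetry by a Bloch--Floquet (partial Fourier) decomposition along $x_3$ and to reduce the three--dimensional problem fiber by fiber to a two--dimensional one. Since every $\gamma\in\cK$ commutes with all $\tau_R$, its integral kernel depends on $x_3-x_3'$ only, and the partial Fourier transform in that variable produces a measurable family $(\gamma_\xi)_{\xi\in\RR}$ of operators on $L^2(\RR^2)$ with $0\le\gamma_\xi\le1$. Setting $G_\gamma:=\frac{1}{2\pi}\int_\RR\gamma_\xi\,\rd\xi$, one checks that $G_\gamma\in\cG$, that $\rho_{G_\gamma}=\rho_\gamma$ (so the Hartree term $\cD(\rho_\gamma-\mu)$ is literally unchanged, being a functional of the density alone), and, using $-\Delta_3=-\Delta_2-\partial_{x_3}^2$ together with $-\partial_{x_3}^2\mapsto\xi^2$ on the fibers,
\[
\tfrac12\underline{\Tr}(-\Delta_3\gamma)=\tfrac12\Tr(-\Delta_2 G_\gamma)+\frac{1}{4\pi}\int_\RR\xi^2\,\Tr(\gamma_\xi)\,\rd\xi .
\]
Thus the entire reduction rests on understanding the last term at fixed $G_\gamma$.

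The key step is therefore a kinetic lemma: for fixed $G\ge0$, minimize $\int_\RR\xi^2\,\Tr(\gamma_\xi)\,\rd\xi$ over all measurable families with $0\le\gamma_\xi\le1$ and $\frac{1}{2\pi}\int_\RR\gamma_\xi\,\rd\xi=G$. I expect the minimizer to be the operator ``Fermi sea'' $\gamma_\xi^\star:=\1(\pi G>\av{\xi})$, the spectral projection of $G$ onto eigenvalues larger than $\av{\xi}/\pi$; it is admissible because $\frac{1}{2\pi}\int_\RR\1(\pi G>\av\xi)\,\rd\xi=G$. To prove optimality \emph{without} assuming the $\gamma_\xi$ commute with $G$, I would compare directly to $\gamma^\star$: writing $M_\xi:=\xi^2-\pi^2G^2$,
\[
\int_\RR\xi^2\,\Tr(\gamma_\xi-\gamma_\xi^\star)\,\rd\xi=\int_\RR\Tr\!\big(M_\xi(\gamma_\xi-\gamma_\xi^\star)\big)\,\rd\xi+\pi^2\,\Tr\!\Big(G^2\!\int_\RR(\gamma_\xi-\gamma_\xi^\star)\,\rd\xi\Big).
\]
The second term vanishes since both families meet the constraint, and the first is nonnegative because $\gamma_\xi^\star=\1(M_\xi<0)$ minimizes $\Tr(M_\xi\,\cdot\,)$ over $0\le\,\cdot\,\le1$ fiberwise; equality forces $\gamma_\xi=\gamma_\xi^\star$ a.e. Evaluating the minimizer on each eigenvalue $g_n$ of $G$ gives $\int_{\av\xi<\pi g_n}\xi^2\,\rd\xi=\tfrac{2\pi^3}{3}g_n^3$, so after the $\tfrac1{4\pi}$ prefactor the last term above reduces precisely to the cubic trace term of $\cE$ in~\eqref{eq:energy-2D}.

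Combining the two ingredients yields both inequalities. For any finite-energy $\gamma\in\cK$ the lemma gives $\underline{\cE}(\gamma)\ge\cE(G_\gamma)\ge I$, hence $\underline I\ge I$. Conversely, for any finite-energy $G\in\cG$ I would reconstruct $\gamma:=\int_\RR^{\oplus}\gamma_\xi^\star\,\rd\xi$ from the Fermi-sea fibers: each $\gamma_\xi^\star$ is a projection, so $0\le\gamma\le1$, the family is translation invariant by construction, and $\gamma$ is self--adjoint and locally trace class with $\underline{\Tr}(\gamma)=\Tr(G)$, so $\gamma\in\cK$; moreover $\rho_\gamma=\rho_G$ and equality holds in the lemma, whence $\underline{\cE}(\gamma)=\cE(G)$ and $\underline I\le\cE(G)$ for every $G$, i.e.\ $\underline I\le I$. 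Thus $\underline I=I$. The same correspondence transports minimizers: if $\gamma^\star$ minimizes $\underline{\cE}$ then $G_{\gamma^\star}$ minimizes $\cE$ with $\rho_{G_{\gamma^\star}}=\rho_{\gamma^\star}$, and if $G^\star$ minimizes $\cE$ then its Fermi-sea lift minimizes $\underline{\cE}$ with the same density; together with the uniqueness of the two--dimensional minimizer proved in Section~\ref{sec:reduced}, both problems share one common ground-state density, independent of $x_3$.

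The main obstacle I anticipate is not the algebra above but the functional-analytic bookkeeping of the direct--integral reduction: establishing measurability of $\xi\mapsto\gamma_\xi$, justifying the splitting of $\underline{\Tr}(-\Delta_3\gamma)$ and the identity $\rho_{G_\gamma}=\rho_\gamma$ at the level of (locally) trace--class operators, and verifying that the reconstructed state $\int^\oplus\gamma_\xi^\star\,\rd\xi$ is genuinely admissible with the claimed kinetic energy. One must restrict throughout to finite-energy states, so that $\Tr(-\Delta_2 G)$ and $\Tr(G^3)$ are finite and every interchange of trace and integral is legitimate. The single genuinely operator-theoretic point is the non-commuting fiberwise rearrangement, handled by the $M_\xi=\xi^2-\pi^2G^2$ comparison, which is also where the Pauli constraint $0\le\gamma_\xi\le1$ is essential.
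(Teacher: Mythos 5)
Your proposal is correct and follows essentially the same route as the paper: Bloch--Floquet fibering in $x_3$, the reduction map $\gamma\mapsto\frac{1}{2\pi}\int_\RR\gamma_\xi\,\rd\xi$ together with the identity $\rho_{G_\gamma}=\rho_\gamma$, the kinetic splitting, and the Fermi-sea fibers $\1(\pi G>\av{\xi})$ achieving equality (the paper's Proposition~\ref{prop Lambda is onto} and Theorem~\ref{thm E_kin gamma > Tr Delta G +...} use exactly this state, written as $\gamma_k^*=\sum_j \1_{\{|k|<\pi g_j\}}|\varphi_j\rangle\langle\varphi_j|$). The only difference is how fiberwise optimality is proved: you use an operator-level comparison with $M_\xi=\xi^2-\pi^2G^2$ and the fact that the spectral projection $\1(M_\xi<0)$ minimizes $\gamma\mapsto\Tr(M_\xi\gamma)$ over $0\le\gamma\le1$, whereas the paper diagonalizes $G$ and applies the scalar bathtub principle to the diagonal entries $m_{j,k}=\langle\gamma_k\varphi_j,\varphi_j\rangle$ --- two implementations of the same idea, both valid without assuming the fibers commute with $G$, and both giving the constant $\frac{\pi^2}{6}\Tr(G^3)$ (note the paper's display~\eqref{eq:energy-2D} carries a typo, $\frac{\pi}{6}$, inconsistent with its own Theorem~\ref{thm E_kin gamma > Tr Delta G +...}).
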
 

The proof of theorem~\eqref{th:equivalence} follows the same line as the one of ~\cite[Theorem 2.8]{GLM21}. We highlight here some arguments in the 2D case. 

\noindent
\emph{Diagonalization of admissible states.} We start by showing that states in $\cK$ can be diagonalized. Let us introduce the partial Fourier transform
\begin{align*}
	\cU :  L^2(\R^3) & \to L^2(\R,L^2(\R^2)) \\
	f& \mapsto \cF_1 ( f(x_1,x_2,\cdot))(k),
\end{align*}
where $\cF_1$ refers to the Fourier transform in $L^1(\RR)$ $\cF_1 ( f(x_1,x_2,\cdot))(k) = \frac{1}{\sqrt{2\pi}} \int_\R \re^{-\ri k t}f(x_1,x_2,t)\rd t$. According to the Floquet--Bloch decomposition (see \cite[Section XIII–16]{ReedSimon4}), for any $\gamma\in\cK$ there exists a unique family of self--adjoint operators $(\gamma_k)_k$ on $L^2(\R^2)$, called the fibers of $\gamma$, such that
\begin{equation}\label{eq gamma and gamma-k}
	\cU \gamma \cU^{-1} = \int_\R^\oplus\gamma_k \rd k,
\end{equation}
that is, for any $f=(f_k)_{k\in\R}\in L^2(\R,L^2(\R^2))$,
\[  (\cU \gamma \cU^{-1} f)_k =  \gamma_k f_k .
\]
 We point out that, if $\gamma\in\cK$, then its fibers $\gamma_k$  satisfy Pauli principle $0\le\gamma_k\le1$. The following proposition links other  properties of $\gamma\in \cK$ to those of its fibers.
\begin{proposition}\label{prop elemntary prop of trans inv op}
	Let $\gamma= \int_\R^\oplus\gamma_k\in\cK$ and suppose that $\gamma$ is locally trace class. One has 
	\begin{itemize}
		\item  $\gamma$ admits an integral kernel $\gamma(\cdot;\cdot):(\R^3)^2 \to \R$ and
		\begin{equation}\label{eq kernel gamma vs kernel gamma-k}
			\gamma(x_1,x_2,x_3;y_1,y_2,y_3) = \frac{1}{{2\pi}} \int_\R \re^{\ri k(x_3-y_3)} \gamma_k (x_1,x_2;y_1,y_2) \rd k,
		\end{equation}
		where, for every $k\in\R$, $\gamma_k(\cdot;\cdot):\R^2\to\R$ is the integral kernel of $\gamma_k$.
		\item Let $\rho_\gamma$ denote the density of $\gamma$, given by 
		$ \rho_\gamma(x)=\gamma(x,x)$ for all $x\in\R^3$. Then,
		\[ \rho(x_1,x_2,x_3)= \frac{1}{{2\pi}} \int_\R \rho_k (x_1,x_2;x_1,x_2) \rd k,
		\]
		where $\rho_k$ refers to the density of $\gamma_k$.\\
		In particular, $\rho_\gamma$ does not depend on the third variable $x_3$.
	\item The \emph{trace per unit length } of $\gamma$ is given by 
	$$
				\VTr(\gamma) :=\int_{\RR^2} \rho_\gamma= \frac{1}{{2\pi}}\int_\R \Tr(\gamma_k) \rd k  .
		$$
		
	\end{itemize}
\end{proposition}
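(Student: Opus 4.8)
The plan is to read off all three assertions directly from the direct-integral decomposition~\eqref{eq gamma and gamma-k}, by unwinding the definition of the partial Fourier transform $\cU$. Since $\cU$ acts as $\cF_1$ only in the third variable, $\cU^{-1}$ is the one-dimensional inverse Fourier transform in $x_3$, and $\gamma=\cU^{-1}\big(\int_\R^\oplus\gamma_k\,\rd k\big)\cU$ is a composition of three pieces whose kernels can be tracked explicitly.

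First I would establish the kernel formula~\eqref{eq kernel gamma vs kernel gamma-k}. Because $\gamma$ is locally trace class, it is in particular locally Hilbert--Schmidt and hence admits an integral kernel; the remaining task is to identify that kernel. I would first extract from the local trace class hypothesis that the fibers $\gamma_k$ are trace class for almost every $k$ with $k\mapsto\Tr(\gamma_k)$ locally integrable, arguing as in the one-dimensional setting of~\cite{GLM21}. Granting that each $\gamma_k$ has an integral kernel $\gamma_k(\cdot;\cdot)$, I would compute $(\gamma f)(x_1,x_2,x_3)$ for $f$ in a dense test class by applying in turn $\cU$, which produces $\frac{1}{\sqrt{2\pi}}\int_\R\re^{-\ri k t}f(\cdot,\cdot,t)\,\rd t$, then multiplication by $\gamma_k$, which integrates against its kernel in $(y_1,y_2)$, and finally $\cU^{-1}$, which reinserts the factor $\frac{1}{\sqrt{2\pi}}\re^{\ri k x_3}$ and integrates in $k$. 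Collecting the two prefactors into $\frac{1}{2\pi}$ and applying Fubini yields exactly~\eqref{eq kernel gamma vs kernel gamma-k}. The delicate point is the interchange of the $k$-, $y_3$- and $(y_1,y_2)$-integrations, which is precisely where the local integrability of $\Tr(\gamma_k)$ is used.

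Second, the density and its independence of $x_3$ follow by setting $y=x$ in~\eqref{eq kernel gamma vs kernel gamma-k}: the phase $\re^{\ri k(x_3-x_3)}=1$ disappears, leaving $\rho_\gamma(x_1,x_2,x_3)=\frac{1}{2\pi}\int_\R\gamma_k(x_1,x_2;x_1,x_2)\,\rd k=\frac{1}{2\pi}\int_\R\rho_k(x_1,x_2)\,\rd k$, which no longer depends on $x_3$. Here I would invoke that each $\gamma_k\ge0$ is trace class, so that $\rho_k(x_1,x_2)=\gamma_k(x_1,x_2;x_1,x_2)$ is a genuine nonnegative $L^1(\R^2)$ density, and that the positivity $\gamma\ge0$ makes the diagonal restriction defining $\rho_\gamma$ meaningful as an $L^1_{\loc}$ function rather than a mere formal object.

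Third, the trace per unit length follows by integrating this density over $\R^2$. Since everything is nonnegative ($\gamma_k\ge0\Rightarrow\rho_k\ge0$), Tonelli's theorem licenses the exchange
\[
\VTr(\gamma)=\int_{\R^2}\rho_\gamma=\frac{1}{2\pi}\int_\R\Big(\int_{\R^2}\rho_k\Big)\,\rd k=\frac{1}{2\pi}\int_\R\Tr(\gamma_k)\,\rd k,
\]
the last equality being the standard identity $\int_{\R^2}\rho_{\gamma_k}=\Tr(\gamma_k)$ valid for nonnegative trace class operators. I expect the main obstacle to lie in the first step: rigorously passing from the local trace class property of $\gamma$ to the trace-class property of almost every fiber together with the local integrability of $k\mapsto\Tr(\gamma_k)$, and justifying that the oscillatory direct integral genuinely furnishes the kernel. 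The nonnegativity of $\gamma$, hence of all fibers, is what keeps these measure-theoretic interchanges clean, and the whole argument runs parallel to the one-dimensional analogue in~\cite{GLM21}.
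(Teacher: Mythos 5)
Your proposal is correct and takes essentially the same approach as the paper: both proofs unwind the definition of the partial Fourier transform $\cU$ on a dense class of smooth test functions to relate the kernel of $\gamma$ to the kernels of its fibers, and then obtain the density and trace-per-unit-length statements by restricting to the diagonal and integrating. The only cosmetic difference is that the paper first identifies the kernel of $\gamma_k$ as a partial Fourier transform of the kernel of $\gamma$ and then applies Fourier inversion, whereas you compose $\cU^{-1}\bigl(\int_\R^\oplus\gamma_k\,\rd k\bigr)\cU$ directly; this is the same computation read in the opposite direction.
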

\begin{proof}
Since $\gamma$ is locally trace class, then $\gamma_k$ is trace class, for all $k\in\R$. 
By straightforward computation, one gets for a smooth enough $f$ and $(x_1,x_2,k)\in\R^3$
\begin{align*}
	(\cU\gamma f)_k (x_1,x_2) &= \frac{1}{\sqrt{2\pi}}\int_\R \re^{-\ri kt}\,(\gamma f)(x_1,x_2,t) \rd t\\
	&=\frac{1}{\sqrt{2\pi}}\int_\R \int_{\R^3} \re^{-\ri kt} \gamma(x_1,x_2,t ; y_1,y_2,\tilde{t}) f(y_1,y_2,\tilde{t})\; \rd y_1 \rd y_2 \rd\tilde{t} \rd t,
\end{align*}
On the other hand,
\begin{align*}
	(\gamma_k(\cU f)_k) (x_1,x_2) &= \int_{\R^2} \gamma_k (x_1,x_2; y_1,y_2) (\cU f)_k (y_1,y_2) \rd y_1 \rd y_2 \\
	&= \frac{1}{\sqrt{2\pi}}\int_{\R^2}\int_\R \re^{-\ri k\tilde{t}} \gamma_k (x_1,x_2; y_1,y_2) f(y_1, y_2, \tilde{t}) \rd y_1 \rd y_2 \rd \tilde{t}
\end{align*}
By identification,
\[ \gamma_k (x_1,x_2; y_1,y_2) = \int_\R \re^{-\ri k(t-\tilde{t})} \gamma(x_1,x_2,t ; y_1,y_2,\tilde{t}) \rd t. 
\]
Therefore, applying an inverse Fourier transform, one obtains \eqref{eq kernel gamma vs kernel gamma-k}. The claim on the density follows immediately by \eqref{eq kernel gamma vs kernel gamma-k}.
\end{proof}

\noindent
\emph{Diagonalization of the kinetic energy}
Let $\gamma= \int_\R^\oplus\gamma_k\in\cK$. 
Let $f=(f_k)_{k\in\R}\in L^2(\R,L^2(\R^2))$ such that $f_k$ is smooth enough for any $k$. Straightforward computation yields
\[ (\cU(-\Delta_3) \gamma\cU^{-1}f)_k = (-\Delta_2 + k^2) \gamma_k f_k.
\] 
Hence, the kinetic energy per unit length of $\gamma$ can be written 
\[ \frac12 \VTr(-\Delta_3 \gamma) =\frac{1}{{4\pi}}\int_\R \Tr((-\Delta_2 + k^2) \gamma_k)\, \rd k.  \]
\noindent
\emph{Reduced states}
Now, we map any admissible state $\gamma$ to a reduced 2D state $G$ as follows
\begin{align*}
	\Lambda : \cK & \to \cG \\
	\gamma & \mapsto \Lambda(\gamma):=\frac{1}{2\pi}\int_\R \gamma_k \rd k. 
\end{align*}
We notice that elements of $\cG$, unlike those of $\cK$, are compact operators. 
We have the following result. 
\begin{proposition}\label{prop Lambda is onto}
	The map $\Lambda$ is onto. Furthermore, for any $\gamma\in\cK$, 
	\[ \rho_\gamma(x_1,x_2,x_3) = \rho_{\Lambda(\gamma)}(x_1,x_2),\qquad \forall(x_1,x_2)\in\R^2.
	\]
\end{proposition}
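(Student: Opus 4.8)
The plan is to treat the two assertions separately, the density identity being an immediate corollary of Proposition~\ref{prop elemntary prop of trans inv op} and the surjectivity being the only substantial point.

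For the density identity, I would simply observe that the integral kernel of $\Lambda(\gamma)=\frac1{2\pi}\int_\R\gamma_k\,\rd k$ is the $k$-average $\frac1{2\pi}\int_\R\gamma_k(x_1,x_2;y_1,y_2)\,\rd k$ of the fiber kernels, so that evaluating on the diagonal yields
$$
\rho_{\Lambda(\gamma)}(x_1,x_2)=\frac1{2\pi}\int_\R\rho_k(x_1,x_2)\,\rd k.
$$
This is exactly the expression for $\rho_\gamma$ furnished by Proposition~\ref{prop elemntary prop of trans inv op}, and since the latter is independent of $x_3$, we get $\rho_\gamma(x_1,x_2,x_3)=\rho_{\Lambda(\gamma)}(x_1,x_2)$.

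For surjectivity, given $G\in\cG$ I would diagonalize it as $G=\sum_i g_i\,|\phi_i\rangle\langle\phi_i|$ with $(\phi_i)_i$ orthonormal, $g_i\ge0$ and $\sum_i g_i=\Tr(G)<\ii$. The central idea is to \emph{spread} each eigenvalue $g_i$ over a symmetric interval of the fiber variable $k$, so that the absence of an upper bound on $G$ is compensated by the length of that interval. Concretely I would set
$$
\gamma_k:=\sum_i \1_{\{|k|\le\pi g_i\}}\,|\phi_i\rangle\langle\phi_i|,
$$
which for each fixed $k$ is the orthogonal projection onto $\overline{\rspan}\{\phi_i:g_i\ge|k|/\pi\}$ (finite rank for $k\neq0$, since $g_i\to0$). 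This makes $0\le\gamma_k\le1$ automatic, so the Pauli principle holds fiberwise, and a direct computation gives $\frac1{2\pi}\int_\R\gamma_k\,\rd k=\sum_i g_i\,|\phi_i\rangle\langle\phi_i|=G$.

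The remaining work is to check that $\gamma:=\cU^{-1}\big(\int_\R^\oplus\gamma_k\,\rd k\big)\cU$ lies in $\cK$. Self-adjointness and the bound $0\le\gamma\le1$ transfer from the fibers to the direct integral; translation invariance holds because $\cU$ turns $\tau_R$ into multiplication by $\re^{-\ri kR}$ on $L^2(\R,L^2(\R^2))$, with which every decomposable operator commutes. I would then note that $k\mapsto\gamma_k$ is weakly, hence strongly, measurable (the scalar maps $k\mapsto\langle u,\gamma_k v\rangle$ are measurable and $L^2(\R^2)$ is separable) and that
$$
\int_\R\Tr(\gamma_k)\,\rd k=2\pi\sum_i g_i=2\pi\,\Tr(G)<\ii,
$$
so that the fibers genuinely assemble into a locally trace class operator with finite trace per unit length, whence $\Lambda(\gamma)=G$ by construction. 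The hard part will be precisely this last step: the algebraic constraints are immediate, but justifying the measurability of the fiber family and the passage from $\int_\R\Tr(\gamma_k)\,\rd k<\ii$ to the local trace-class property of $\gamma$ requires the direct-integral bookkeeping already carried out in \cite[Theorem~2.8]{GLM21}, to which I would refer rather than reprove.
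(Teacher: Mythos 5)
Your proposal is correct and takes essentially the same route as the paper: you diagonalize $G=\sum_j g_j|\varphi_j\rangle\langle\varphi_j|$ and define the fibers $\gamma_k=\sum_j \1_{\{|k|\le \pi g_j\}}|\varphi_j\rangle\langle\varphi_j|$ (the paper writes $\1_{\{|k|\le s_j\}}$ with the radii $s_j$ fixed to $\pi g_j$ only at the end), and the density identity is read off from Proposition~\ref{prop elemntary prop of trans inv op} exactly as you do. Your additional bookkeeping --- translation invariance of $\gamma$ via commutation of decomposable operators with multiplication by $\re^{-\ri kR}$, measurability of $k\mapsto\gamma_k$, and finiteness of $\int_\R\Tr(\gamma_k)\,\rd k$ --- only makes explicit what the paper compresses into the single assertion ``Thus, $\gamma\in\cK$''.
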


\begin{proof}
Let $G\in\cG$. One can write $G=\sum_j g_j |\varphi_j\rangle\langle \varphi_j|$, with $(\varphi_j)_j$ an orthonormal family in $L^2(\R^2)$ and $g_j\geq 0$ such that $\sum_j g_j=\Tr(G)<\ii$. We construct $\gamma$ satisfying $\Lambda(\gamma)=G$ through its fibers $(\gamma_k)_k$. Let us define, for any $k\in\R$, $\gamma_k$ as 
	$$\gamma_k=\sum_j \alpha_j^{(k)}|\varphi_j\rangle\langle \varphi_j|\quad \text{with}\quad \alpha_j^{(k)}=1_{\{|k|\le s_j\}},$$
	where, for any $j\in\N$, $s_j\geq 0$ will be determined later. One has $0\le\gamma_k\le 1$ and $\gamma_k\in\fS(L^2(\R^2))$. Thus, $\gamma\in\cK$. Moreover, $\rho_{\gamma_k} = \sum_j \alpha_j^{(k)} |\varphi_j|^2$. Therefore,
	\begin{align*}
		\rho_\gamma(x) &= \frac{1}{2\pi}\int_\R \rho_{\gamma_k} \rd k\\
		& = \frac{1}{2\pi}\sum_j \int_\R 1_{\{|k|\le s_j\}} |\varphi_j(x_1,x_2)|^2 \rd k = \frac{1}{\pi} \sum_j s_j |\varphi_j(x_1,x_2)|^2 .
	\end{align*}
	Finally, with the choice $s_j=\pi{g_j}$, one obtains $\rho_\gamma=\rho_G$, which proves the claim.
\end{proof}

\noindent
\emph{Reduction of the kinetic energy}
We now rewrite the kinetic energy as a  functional of elements of $\cG$. 
\begin{theorem}\label{thm E_kin gamma > Tr Delta G +...}
	Let $G\in\cG$. Then,
	\begin{equation}
		\inf_{\Lambda(\gamma)=G} \underline{\cE}^\kin (\gamma) =\frac12 \Tr(-\Delta_2 G) +\frac{\pi^2}{6} \Tr(G^3).
	\end{equation}
\end{theorem}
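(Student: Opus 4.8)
The plan is to separate the two pieces of the diagonalized kinetic energy and to handle the nontrivial piece by the bathtub principle. Writing $\gamma=\int_\R^\oplus \gamma_k\,\rd k$, the reduction of the kinetic energy obtained just above gives
\[
\underline{\cE}^\kin(\gamma)=\frac{1}{4\pi}\int_\R \Tr\bigl((-\Delta_2+k^2)\gamma_k\bigr)\,\rd k=\frac{1}{4\pi}\int_\R \Tr(-\Delta_2\gamma_k)\,\rd k+\frac{1}{4\pi}\int_\R k^2\,\Tr(\gamma_k)\,\rd k.
\]
First I would observe that the first summand does not depend on the particular fibers: since $\tfrac{1}{2\pi}\int_\R\gamma_k\,\rd k=G$ and $\gamma_k\ge0$, conjugating by $(-\Delta_2)^{1/2}$ and taking the trace yields $\frac{1}{2\pi}\int_\R\Tr(-\Delta_2\gamma_k)\,\rd k=\Tr(-\Delta_2 G)$, so that summand equals $\tfrac12\Tr(-\Delta_2 G)$ for \emph{every} admissible $\gamma$. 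Hence only the second summand must be minimized over the fibers subject to $0\le\gamma_k\le1$ and $\frac{1}{2\pi}\int_\R\gamma_k\,\rd k=G$.

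For the lower bound I would decouple the operator constraint by testing against a complete orthonormal eigenbasis $(\varphi_j)_j$ of the compact operator $G$, with $G\varphi_j=g_j\varphi_j$. Setting $n_j(k):=\langle\varphi_j,\gamma_k\varphi_j\rangle$, the bound $0\le\gamma_k\le1$ gives $0\le n_j(k)\le1$, while the constraint yields $\frac{1}{2\pi}\int_\R n_j(k)\,\rd k=g_j$. Because $\gamma_k\ge0$, the trace in this basis reads $\Tr(\gamma_k)=\sum_j n_j(k)$, so by Tonelli
\[
\frac{1}{4\pi}\int_\R k^2\,\Tr(\gamma_k)\,\rd k=\sum_j\frac{1}{4\pi}\int_\R k^2\, n_j(k)\,\rd k.
\]
For each fixed $j$ the weight $k\mapsto k^2$ is smallest near the origin, so the bathtub principle shows that $\frac{1}{4\pi}\int_\R k^2 n_j(k)\,\rd k$ is minimized, among $0\le n_j\le1$ with prescribed mass $\int_\R n_j=2\pi g_j$, by the symmetric indicator $n_j=\1_{\{|k|\le\pi g_j\}}$, the minimal value being $\frac{1}{4\pi}\int_{-\pi g_j}^{\pi g_j}k^2\,\rd k=\frac{\pi^2}{6}g_j^3$. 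Summing over $j$ gives $\frac{1}{4\pi}\int_\R k^2\,\Tr(\gamma_k)\,\rd k\ge\frac{\pi^2}{6}\sum_j g_j^3=\frac{\pi^2}{6}\Tr(G^3)$, and adding the fixed first summand yields $\underline{\cE}^\kin(\gamma)\ge\tfrac12\Tr(-\Delta_2 G)+\frac{\pi^2}{6}\Tr(G^3)$ for all admissible $\gamma$.

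It then remains to exhibit an admissible state attaining this value, and here I would reuse the construction from the proof of Proposition~\ref{prop Lambda is onto}: the state with fibers $\gamma_k=\sum_j\1_{\{|k|\le\pi g_j\}}\,|\varphi_j\rangle\langle\varphi_j|$ lies in $\cK$, satisfies $\Lambda(\gamma)=G$, and saturates each scalar bathtub inequality above, so that $\underline{\cE}^\kin(\gamma)=\tfrac12\Tr(-\Delta_2 G)+\frac{\pi^2}{6}\Tr(G^3)$. Combining the lower bound with this optimizer proves the identity. The only delicate points are the interchange of trace and integral (legitimate by Tonelli thanks to positivity) and the reduction of the operator-valued minimization to the family of independent scalar bathtub problems, which is exactly what testing in the eigenbasis of $G$ accomplishes; the remaining manipulations are routine. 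Finally, when $\Tr(-\Delta_2 G)=+\infty$ the asserted equality holds trivially with both sides infinite.
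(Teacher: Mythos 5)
Your proof is correct and follows essentially the same route as the paper's: the same fiber decomposition of the kinetic energy, the same reduction to independent scalar problems by testing $\gamma_k$ against the eigenbasis of $G$, the same application of the bathtub principle yielding the optimizer $\1_{\{|k|\le\pi g_j\}}$, and the same explicit state (from the surjectivity proposition) saturating the bound. Your additional remarks on Tonelli and the infinite-energy case are minor refinements of, not departures from, the paper's argument.
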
  

\begin{proof}
	Let $G\in\cG$ and $\gamma=\int^{\oplus}_\RR\gamma_k\in\cK$ with $\Lambda(\gamma)=G$. Write $G=\sum_j g_j |\varphi_j\rangle\langle \varphi_j|$. We have
	\begin{align*}
		\underline{\cE}^\kin (\gamma) &= \frac{1}{2}\VTr(-\Delta_3 \gamma)=\frac{1}{4\pi}\int_\R \Tr((-\Delta_2 + k^2) \gamma_k)\, \rd k\\
		&=\frac12 \Tr(-\Delta_2 G) + \frac{1}{4\pi} \int_\R \sum_j k^2 m_{j,k} \rd k,
	\end{align*}
	where $m_{j,k} :=\langle \gamma_k \varphi_j ,\varphi_j\rangle$. Notice that
	\begin{equation}\label{eq bathtube cond}
		0\le m_{j,k} \le 1 \quad\text{and}\quad
		\displaystyle\int_\R m_{j,k}\, \rd k= 2\pi g_j.
	\end{equation}
	Hence,
	\begin{align*}
		\underline{\cE}^\kin (\gamma)\ge \frac12 \Tr(-\Delta_2 G) +\frac{1}{4\pi} \sum_j \inf\left\{ \int_\R k^2 m(k) \rd k,\;  m(k)\;  \text{satisifying}\; \eqref{eq bathtube cond} \right\}.
	\end{align*}
	According to the Bath-tub principle, \cite[Theorem 1.14]{LiebLossAnalysis}, the infinimum in the RHS of the above inequality is achieved by $m_j^*(k):=1_{\{|k|<\pi g_j\}}$, the same state exhibited in the proof of Proposition~\ref{prop Lambda is onto}. 
	Therefore,
	\begin{align*}
		\underline{\cE}^\kin (\gamma)&\ge 
		 \frac12 \Tr(-\Delta_2 G) +\frac{1}{6\pi} \sum_j (\pi g_j)^3=  \frac12 \Tr(-\Delta_2 G) +\frac{\pi^2}{6}\Tr(G^3),
	\end{align*}
	with equality for $\gamma^*\in\cK$ with $\gamma_k^*=\sum_j m_j^*(k) |\varphi_j\rangle\langle \varphi_j|$. 
\end{proof}

\begin{remark}
	A consequence of Theorem \ref{thm E_kin gamma > Tr Delta G +...} is that for a representable charge density $\rho\in L^1(\R^2)$, one could associate the kinetic energy functional as follows
	\begin{equation}\label{eq kin energy as function of rho}
		\cE^\kin(\rho)= \inf\left\{ \Tr(-\Delta_2 G) +\frac{2\pi^2}{3} \Tr(G^3),\; G\in\cG_0,\; \rho_G=\rho \right\}.
	\end{equation}
	Finding an explicit expression for the above functional $\rho\mapsto\cE^\kin(\rho)$ is highly challenging. It is only possible to associate a lower bound to this energy as follows
	\begin{equation}\label{eq Lieb-Thirr}
		C \left( \int_{\R^2} \rho_G^{5/3}\right) \le \frac12\Tr(-\Delta_2 G)+\frac{\pi^2}{6}\Tr(G^3)=\cE^\kin(\rho).
	\end{equation}
	for an appropriate constant $C$. This is a consequence of the Lieb--Thirring inequality, see for instance \cite{LieThi-75,LieThi-76} and \cite[Appendix A]{GLM21} (see Section~\ref{sec:TF}). In particular, this yields that $\rho_G \in L^{5/3}$, once the kinetic enegry of $G$ is finite. Actually, the two dimensional Lieb--Thiring inequality applied to $G/Z$ implies that $\rho_G\in L^2(\R^2)$ and the Hoffman-Ostenhof inequality claims that $\sqrt{\rho_G}\in H^1(\R^2)$, see \cite{Hoffman_Ostenhof77}.
\end{remark}

The proof of Theorem~\ref{th:equivalence} can be easily deduced from Theorem~\ref{thm E_kin gamma > Tr Delta G +...} and  Proposition~\ref{prop Lambda is onto}.

\section{Well-posdness of the reduced model}\label{sec:reduced}

This section is devoted to the analysis of the reduced  two-dimensional problem~\eqref{eq:2D-prob}. In particular, we prove well-posdness and show that the ground state satisfies a self-consistent equation. 

\begin{theorem}\label{th:existence}
 Assume that $\mu\in L^2(\RR^2)\cap L^1(\RR^2)$.  Then the energy $\cE $ in \eqref{eq:energy-2D} admits a unique minimizer $G_*$ in $\cG$. Furthermore, one has
\begin{equation}\label{eq Eul-Lar equat for G_*}
\begin{cases}
\quad G_* =  \frac{\sqrt{2}}{\,\pi}\left(\lambda_* - H_*\right)_+^{\frac{1}{2}} \\
\quad H_* = -\Delta_2 + \rV_{\rho_*-\mu}
\end{cases},
\end{equation}
where $\rho_*:=\rho_{G_*}$, $x_+:=\max(0,x)$ for $x\in\R$ and $\left(\lambda_* - H_*\right)_+^{\frac{1}{2}}$ should be unterstood in terms of functional calculus for the self--adjoint operator $H_*$.
\end{theorem}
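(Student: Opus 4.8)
The plan is to establish existence by the direct method, uniqueness by strict convexity, and the self-consistent equation \eqref{eq Eul-Lar equat for G_*} by a first-order analysis on the convex cone $\cG$. A preliminary observation drives everything: finiteness of $\cE(G)$ forces $\rho_G-\mu\in\cC$, and by the neutrality remark following the definition of $\cC$ this means $\int_{\R^2}\rho_G=\int_{\R^2}\mu=Z$. Thus the particle number is pinned, and \eqref{eq:2D-prob} is in effect the constrained problem $\inf\{\cE(G):G\in\cG,\ \Tr(G)=Z\}$. Each term of $\cE$ is nonnegative (the kinetic terms trivially, and $\cD(\cdot)=4\pi\|(-\Delta_2)^{-1/2}(\cdot)\|_{L^2}^2\ge0$), so $I\ge0$; and a finite-rank trial state built from a smooth density of total charge $Z$ with $\rho=\mu$ gives $\cE<\infty$, so $0\le I<\infty$.

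First I would extract a priori bounds along a minimizing sequence $(G_n)$: from $\cE(G_n)\to I$ we get $\Tr(-\Delta_2G_n)\le C$, $\Tr(G_n^3)\le C$, $\cD(\rho_{G_n}-\mu)\le C$ and $\Tr(G_n)=Z$, and by \eqref{eq Lieb-Thirr} and the Hoffmann--Ostenhof inequality also $\|\sqrt{\rho_{G_n}}\|_{H^1}\le C$. Hence $(1-\Delta_2)^{1/2}G_n(1-\Delta_2)^{1/2}$ is bounded in trace class, and along a subsequence $G_n\rightharpoonup G_*$ in the associated weak-$*$ topology, with $G_*\ge0$, $\Tr(G_*)\le Z$, the kinetic energy weakly lower semicontinuous, and $\rho_{G_n}\to\rho_{G_*}$ in the distributional sense; since $\cD$ is the square of a Hilbert-space norm on $\cC$, it too is weakly lower semicontinuous. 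The main obstacle is then compactness: I must rule out escape of charge to infinity, i.e.\ show $\Tr(G_*)=Z$ so that $G_*$ is admissible and no energy leaks in the limit. This is exactly where the long-range (logarithmic), positive-definite character of $\cD$ enters: any mass splitting off to infinity must be asymptotically neutral for $\cD(\rho_{G_n}-\mu)$ to remain bounded, and a neutral cluster far from the localized background $\mu$ cannot strictly lower the energy. I would make this rigorous by a concentration-compactness dichotomy or a direct binding comparison, after which lower semicontinuity gives $\cE(G_*)\le I$ and $G_*$ is a minimizer.

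For uniqueness I would use strict convexity. The map $G\mapsto\tfrac12\Tr(-\Delta_2G)$ is linear, $G\mapsto\Tr(G^3)$ is convex on $\cG$ by Klein's inequality (as $t\mapsto t^3$ is convex on $[0,\infty)$), and $G\mapsto\cD(\rho_G-\mu)$ is convex, being the positive quadratic form $\cD$ composed with the affine map $G\mapsto\rho_G-\mu$. If $G_1\ne G_2$ were both minimizers, then $t\mapsto\cE((1-t)G_1+tG_2)$ would be affine $\equiv I$, forcing the nonnegative second derivative of each convex term to vanish; since $\tfrac{d^2}{dt^2}\Tr(G_t^3)=6\,\Tr\!\big(G_t(G_2-G_1)^2\big)=6\,\|(G_2-G_1)G_t^{1/2}\|_{\mathrm{HS}}^2$, letting $t$ vary forces $G_2-G_1$ to vanish on $\Ran(G_1+G_2)\supseteq\Ran(G_2-G_1)$, hence $G_1=G_2$, a contradiction.

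Finally, for the Euler--Lagrange equation I would differentiate $\cE$ on the cone, using Theorem~\ref{thm. pot energy = int pot. density} to identify the derivative of $G\mapsto\cD(\rho_G-\mu)$ with multiplication by $\rV_{\rho_G-\mu}$. Minimizing the convex $\cE$ over the cone $\cG$ under $\Tr(G)=Z$ yields, via a Lagrange multiplier $\lambda_*$, the variational inequality $\Tr\big((\cE'(G_*)-\lambda_*)(G-G_*)\big)\ge0$ for all $G\in\cG$; exploiting that $\cG$ is a cone gives the complementary-slackness pair $\cE'(G_*)-\lambda_*\ge0$ and $(\cE'(G_*)-\lambda_*)G_*=0$. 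Collecting the one-body terms into $H_*=-\Delta_2+\rV_{\rho_*-\mu}$, this reads $\tfrac{\pi^2}{2}G_*^2=\lambda_*-H_*$ on $\Ran(G_*)$ and $\lambda_*-H_*\le0$ on $\Ker(G_*)$, so by the spectral theorem for $H_*$ one obtains $G_*=\tfrac{\sqrt2}{\pi}(\lambda_*-H_*)_+^{1/2}$. The residual points to check are the Gateaux differentiability of $\Tr(G^3)$ and of $\cD(\rho_G-\mu)$ on the trace-class cone, and the existence of $\lambda_*$, which follows from monotonicity of $\lambda\mapsto\Tr\big((\lambda-H_*)_+^{1/2}\big)$ together with the neutrality constraint $\Tr(G_*)=Z$.
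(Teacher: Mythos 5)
Your uniqueness argument and your derivation of the self-consistent equation are essentially the paper's own: the paper also obtains uniqueness from strict convexity of $G\mapsto\Tr(G^3)$, and derives the Euler--Lagrange equation by the same first-order expansion along $G_t=G_*+t(G-G_*)$, using Theorem~\ref{thm. pot energy = int pot. density} to identify the derivative of the Hartree term with $\Tr(V_*(G-G_*))$, followed by the same complementary-slackness reasoning for $L_*=H_*+\frac{\pi^2}{2}G_*^2$ on the cone (the paper introduces $\lambda_*$ directly as $\inf\sigma(L_*)$ rather than as a multiplier for the constraint $\Tr(G)=Z$, which sidesteps the "existence of $\lambda_*$" issue you leave as a residual point; the two descriptions coincide here). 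Your preliminary observation that finiteness of the energy pins the particle number is also the correct reading of the implicit constraint.

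The genuine gap is in the existence part. After the a priori bounds and the extraction of weak limits, which match the paper, you declare that the remaining obstacle is escape of charge to infinity and propose to rule it out "by a concentration-compactness dichotomy or a direct binding comparison"; this key step is never executed. Moreover, the binding heuristic you give (``a neutral cluster far from $\mu$ cannot strictly lower the energy'') is not the relevant mechanism: any escaping cluster consists of nonnegative electron density with no nuclear background accompanying it, so it can never be asymptotically neutral unless it vanishes. Hence your own observation that boundedness of $\cD$ forces asymptotic neutrality already closes the argument, with no dichotomy needed: quantitatively, separating a charge $\delta>0$ from the neutral remainder by a distance $R$ costs of order $\delta^2\log R$ in the 2D Coulomb energy. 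The paper indeed runs no concentration-compactness at all; it proves lower semicontinuity term by term, and the compactness issue is absorbed into the identification of the weak $L^2$ limit of the Riesz potentials, $\rW_{\rho_n-\mu}\rightharpoonup \rW_*=\rW_{\rho_*-\mu}$, obtained by testing against mean-zero Schwartz functions. Since $\rho_*-\mu\in L^1(\R^2)$ and $\rW_{\rho_*-\mu}\in L^2(\R^2)$ force $\cF_2(\rho_*-\mu)(0)=0$ (in two dimensions $|k|^{-1}$ is not square-integrable at the origin), the weak limit is automatically neutral, i.e.\ $\Tr(G_*)=Z$ and no charge is lost. So your strategy can be completed, but as written the central compactness claim is asserted rather than proved, and the machinery you invoke is heavier than what the structure of the problem (all escaping charge has one sign) actually requires.
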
 

\begin{proof}
The proof of this theorem follows the same lines as the one of~\cite[Theorem 2.8]{GLM21}. We detail here the main arguments in our 2D case. 
\noindent
\emph{Existence and uniqueness of the minimizer}. The set $\cC$ is convex and the functional $\cE $ is convex and positive. Then, the existence of a minimizer is reduced to the lower semi-continuity of $\cE $. Let
	let $(G_n)_n\subset\cG$  be a minimizing sequence of $\cE $ and denote by $\rho_n:=\rho_{G_n}$. One has
	\[ \cE (G_n) = \frac{1}{2}\Tr(-\Delta_2 G_n) + \frac{\pi^2}{6} \Tr(G_n^3) + \cD(\rho_{n} -\mu) \le C,
	\]
	where $C$ is a positive constant. In particular, $(G_n)_n$ and $(-\Delta G_n)_n$ are bounded sequences in the Schatten space of trace class operators $\fS(L^2(\R^2))$ and $(G_n)_n$ is bounded in the Schatten space of cubic trace class operators $\fS_3(L^2(\R^2))$. Hence, taking into the account that $-\Delta_2$ is closed, there exists $G_*\in\fS(L^2(\R^2))\cap \fS_3(L^2(\R^2))$ such that
	$$(G_n,-\Delta_2 G_n)\rightharpoonup (G_*,-\Delta_2 G_*) \quad\text{in}\quad \left(\fS(L^2(\R^2))\right)^2,$$
	$$G_n\rightharpoonup G_*\quad\text{in}\quad \fS_3(L^2(\R^2)),$$
	and 
	\[ \frac{1}{2}\Tr(-\Delta_2 G_*) + \frac{2\pi^2}{3} \Tr(G_*^3) \le \liminf_n \left( \frac{1}{2}\Tr(-\Delta_2 G_n) + \frac{2\pi^2}{3} \Tr(G_n^3)\right).
	\]
	By the Lieb-Thirring inequality~\cite{LieThi-75}, $(\rho_n)_n$ is bounded in $L^{2}(\R^2)$. Thus, $(\rho_n)_n$ converges weakly in $L^2(\R^2)$. Moreover, its weak limit is $\rho_*:=\rho_{G_*}$. Indeed, for any function $h\in C_c(\R^2)$ one has
	\begin{align*}
		\int_{\R^2} h\rho_*  \rd x &= \Tr\left((1-\Delta)^{-1}\cM_h G_*(1-\Delta)\right)\\
		& = \lim_n  \Tr\left((1-\Delta)^{-1}\cM_h G_n (1-\Delta)\right) = \lim_n \int_{\R^2} h\rho_n , 
	\end{align*}
	where $\cM_h$ stands for the multiplication operator by $h$ in $L^2(\R^2)$, and the convergence is obtained by the fact that $(1-\Delta)^{-1}\cM_h$ is a compact operator. 
	Besides, $\|\rW_{\rho_n -\mu}\|_2 =\cD(\rho_n-\mu)\le C$, for all $n\in\N$. Hence, $$W_{\rho_n-\mu}\rightharpoonup W_* \text{ in } L^2(\R^2)\quad \text{and}\quad \|W_* \|_2 \le\liminf_n \|\rW_{\rho_n -\mu}\|_2.$$
 Let us prove that $W_*=W_{\rho_*-\mu}$. Let $\varphi$ be a test function  in the Schwartz space $\cS(\R^2)$
 such that $\int_{\R^2} \varphi\, \rd x=0$. By Parseval's identity, one gets
	\begin{align*}
		\int_{\R^2} \rW_{\rho_n -\mu} (x) \varphi(x) \rd x = \left\langle \frac{\cF_2(\rho_n -\mu)}{|k|},\cF_2(\varphi) \right\rangle
		&= \left\langle \cF_2(\rho_n -\mu),\frac{\cF_2(\varphi)}{|k|} \right\rangle
		\\
		&=\langle \rho_n -\mu, \rW_\varphi \rangle.
	\end{align*}
	Letting $n$ go to $\infty$, and noting that $\rho_n-\mu\in L^2(\RR^2)$, we obtain 
	\[ \int_{\R^2} \rW_{*} (x) \varphi(x) \rd x = \int_{\R^2} \left(\rho_* -\mu \right)(x)\, \rW_\varphi (x)\, \rd x = \int_{\R^2} \rW_{\rho_* -\mu} (x) \varphi(x) \rd x.
	\]
	This actually identifies $\rW_*$ with $\rW_{\rho_* -\mu}$. Summarizing, one has $G_*\in\cG$ and 
	\[ I\le \cE (G_*) \le \liminf_n \cE (G_n) =I,
	\]
	which proves that $G_*$ is actually a minimizer of $\cE $. The uniqueness comes from the strict convexity of $\Tr\bra{G^3}$. 

\noindent
\emph{Proof of the Euler Lagrange equation.} Let $G_*$ denote the minimizer of $\cE $, $\rho_*:=\rho_{G_*}$ its density and $V_*:=V_{\rho_*-\mu}$ the mean-field potential. Let $G\in\cG$ and $t\in[0,1]$. One has $G_t:=G_*+t(G-G_*)\in\cG$ and $\cE (G_t)\ge \cE (G_*)$. Besides,
	\begin{align*}
		\cD(\rho_{G_t}-\mu) 
		&= \cD(\rho_*-\mu) +2t\cD(\rho_*-\mu,\rho_G -\rho_*) +o(t)\\
		&= \cD(\rho_*-\mu) + 2\, t\, \Tr(V_*(G-G_*)) +o(t).
	\end{align*}
Thus, 
	\[ t\left(\Tr\left(-\frac{1}{2}\Delta_2(G-G_*)\right) +\frac{\pi^2}{2}\Tr\left( G_*^2(G-G_*)\right) +  \Tr(V_*(G-G_*))\right) + o(t) \ge 0.
	\]
	Hence, dividing by $t$ and letting $t\to 0$, one obtains
	\[ \Tr\left(\left(-\frac{1}{2}\Delta_2 +\frac{\pi^2}{2} G_*^2+V_*\right)(G-G_*)\right) \ge 0.
	\]
	Proceeding as in \cite[Sect. 4.3.2]{GLM21} and \cite[Proposition 3.8]{GLM23}, set $H_*:=-\frac{1}{2}\Delta_2 +V_*$ and $L_*=H_*+\frac{\pi^2}{2} G_*^2$. Therefore, 
	$ \Tr(L_* G) \ge \Tr(L_* G_*)$, for all $G\in\cG$. This implies that $L_*$ is bounded from below and
	\[ \Tr(L_* G_*)=\inf\{ \Tr(L_* G),\; G\in\cG \}.
	\]
	Set $\lambda_*=\inf\sigma(L_*)$ the bottom of the spectrum of $L_*$. Then, $(L_* -\lambda_*)G_*=0$. Next, let us expand $G_*$ as follows $G_*=\sum_{j\in\N} g_j^* | \phi_j^*\rangle \langle \varphi_j^*| $, where $g_j\ge0$ for any $j\in\N$ with $\sum_j g_j= Z$, and $\overline{\{\varphi_j,\; j\in\N\}}=L^2(\R^2)$. One has $(L_*-\lambda_*)\varphi_j^* =0$, for any $j\in\N$ such that $g_j>0$. That is
	\[ H_*\varphi_j^* = \left(-\frac{1}{2}\Delta_2 +V_*\right)\varphi_j^* = \left(\lambda_* -\frac12(\pi g_j^*)^2\right)\varphi_j^*,\quad\text{once}\quad g_j>0.
	\]
	Hence, $s_j:=\lambda_* -\frac12(\pi g_j^*)^2$ belongs to the point spectrum of $H_*$. Conversely, if $s\in\sigma(H_*)$ such that $s<\lambda_*$, then $g=\frac{\sqrt{2}}{\pi}\sqrt{\lambda_*-s}$ is an eigenvalue of $G_*$. This yields $G_*=\frac{\sqrt{2}}{\pi}\left(\lambda_*-H_*\right)_{+}^{1/2}$.
Note that $H_*$ is a compact perturbation of $L_*$; thus they share the same essential spectrum. It follows that $H_*$ cannot have essential spectrum below $\lambda_*$. 
\end{proof}
\section{Numerical simulation}\label{sec:num}
\subsection{Thomas-Fermi model}\label{sec:TF}
We perform numerical illustrations on the simple Thomas-Fermi model, which can be seen as a semi--classical limit of the rHF model. 
The kinetic energy in this model reads
$$\cE^\TF_\kin (\rho)=c_\TF \int_{\R^2} \rho^{5/3}(x) \rd x,\quad \text{with}\quad c_{\TF}=\frac{3}{10}(3\pi^{2})^{2/3}.$$ 
Therefore, for a given $0\le\mu\in L^1(\R^2)$, we define the Thomas-Fermi energy per unit length as
\[ \cE^\TF(\rho) := c_\TF \int_{\R^2} \rho^{\frac{5}{3}}(x) \rd x + \frac{1}{2}\cD(\rho -\mu),
\]
for every $\rho\in \cC^\TF$, where
\[ \cC^\TF:=\{ \rho\in L^1(\R^2)\cap L^{5/3}(\R^2),\; \rho-\mu \in \cC \} .
\]
We point out that Thomas--Fermi model and its derivatives are widely considered in the literature, see \cite{Lieb77,CatLeBrLions98,BlancLeBr00}. The functional $\cE^\TF$ is stricltly convex and has a unique minimizer $\rho_\TF$ that satisifes the following self--consistent equation
\begin{equation}\label{eq EL for TF}
\begin{cases}
    \rho_{TF}=\left(\frac3{5c_{TF}}(\lambda-\rV_{\TF})_+\right)^{3/2}\\
    \rV_\TF = \rV_{\rho_\TF -\mu}
\end{cases},
\end{equation}
with $\lambda$ the Fermi level of the system is such that $\int_{\R^2}\rho_\TF=Z:=\int_{\R^2} \mu$.\\  We emphasize that we still use our expression for the potential energy $\cD$ and the mean--field potential in \eqref{eq V=U1+U2} which is involved in the self--consistent equation \eqref{eq EL for TF}. We next show a numerical solution to \eqref{eq EL for TF}. We also contrast the outcomes from this equation with the non-regularized reduced energy in which the Coulomb energy and potential are
\begin{equation}\label{eq tilde(D)=log(x-y)}
 {\cD}_2(f,g)=-2\int_{\R^2}\int_{\R^2} \log(|x-y|) f(x) g(y)\, \rd x \rd y
\end{equation}
and 
\begin{equation}\label{eq log pot}
\widetilde{\rV}_f (x)=-2 \int_{\R^2} \log(|x-y|) f(y) \, \rd y.
\end{equation}




\subsection{Numerical schemes}

Python is used to carry out the numerical simulation. The two-dimensional functions are evaluated on a two-dimensional grid representing $[-a,a]\times[-b,b]$, with $N_a\times N_b$ discretization points (we took $a=b=8$ and $N_a=N_b=41$). The various integrals inolved in the total energy are then approximated using quadrature techniques; we used the two-dimensional trapezoid rule. 

We use two different methods in the regularized and non-regularized case. In the latter,
the total energy is approximated using a numerical integration method. 
The constraints of the model are simply the positivity of the density $\rho$ along with the neutrality condition $\int_{\R^2} \rho=Z$. The resulting discretized problem is thus a finite-dimensional optimization problem on $\rho_{ij}$ (the approximated values of the density on the grid points) subject to linear constraints. This problem is solved numerically in Python using the predefined function $\texttt{minimize}$ from $\texttt{Scipy}$ library.
In the regularized case, we apply an iterative process to solve equation~\eqref{eq EL for TF} as follow:
at each iteration $n$, we evaluate the potential $\rV_n:=\rV_{\rho_n-\mu}$ using~\eqref{eq V=U1+U2} and we compute $\lambda_n$ solution to
\begin{equation}
    Z_n(\lambda_n)=0, \quad \text{where} \quad Z_n(\lambda)=\int_{\R^2} \left(\frac3{5 c_{TF}}(\lambda-\Phi_{n})_+\right)^{3/2}- Z.
\end{equation}
Note that, since $Z_n$ is a non-decreasing function of $\lambda$, a simple dichotomy can be used to compute efficiently $\lambda_n$. We then update the density by setting
\begin{equation}
    \widetilde{\rho_{n+1}}=\left(\frac3{5 c_{TF}}(\lambda_n-\Phi_n)_+\right)^{3/2} \quad \text{and} \quad \rho_{n+1}=t\widetilde{\rho_{n+1}}+(1-t)\rho_n,
\end{equation}
where $t\in[0,1]$ is optimized to minimize the Thomas-Fermi energy (we use again the Python predefined function \texttt{minimize}). The algorithm is terminated when a tolerance $\varepsilon$ is reached between two successively evaluated energies. We used $\varepsilon=10^{-5}$. 

For the approximation of the regularized energy $\mathcal{D}(\rho-\mu)$ by trapezoid rule, we use the formula~\eqref{energy_by_potential} involving the potential, which is valid since $\rho-\mu$ is neutral, instead of the one involving the Fourier transform. 


\subsection{Numerical results}
We present here the numerical results obtained for our test case, which corresponds to a homogeneous charged nanowire localized in a given square
\begin{equation}\label{eq:mu1}
    \mu_1(x)=\mathds{1}(|x_1|<2)\mathds{1}(|x_2|<2).
\end{equation}
We note that we take $c_{TF}=\frac23 \pi^2$. We start by plotting $\mu_1$ in Figure~\ref{fig:mu1}. Then, Figures~\ref{fig:resultsmu1} and~\ref{fig:resultsregmu1} show the results for the non-regularized and regularized models, respectively. The two models' combined results are fairly close. Figure~\ref{fig:resultserrormu1} shows the error between the two models.
We are able to say the following. The difference between the two densities appears to be almost zero far from the nanowire. The relative error is of order $13\%$ between potentials and of order $11\%$ between densities in the material's vicinity. Ultimately, there is a $7\%$ error between the two ground states energies in the two models.

\begin{figure}[H]
\centering
\begin{subfigure}{0.45\textwidth}
    \includegraphics[width=\textwidth]{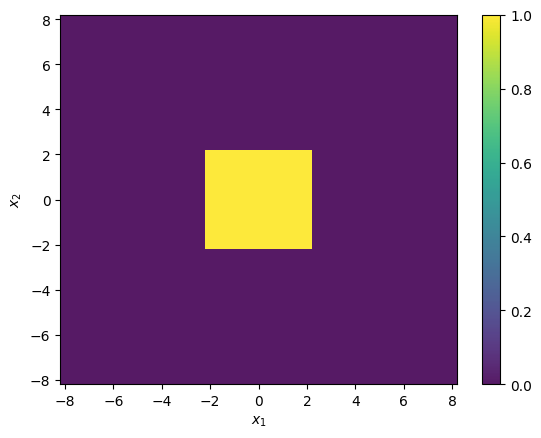}
    \caption{2D representation.}
    \label{fig:first}
\end{subfigure}
\hfill
\begin{subfigure}{0.45\textwidth}
    \includegraphics[width=\textwidth]{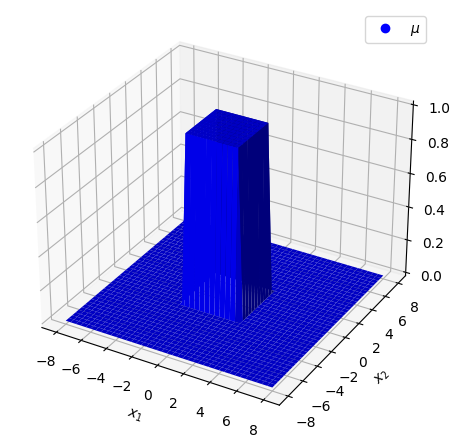}
    \caption{3D representation.}
    \label{fig:second}
\end{subfigure}
        
\caption{Nuclear density $\mu_1$ of a simple nanowire.}
\label{fig:mu1}
\end{figure}


\begin{figure}[H]
\centering
\begin{subfigure}{0.42\textwidth}
    \includegraphics[width=\textwidth]{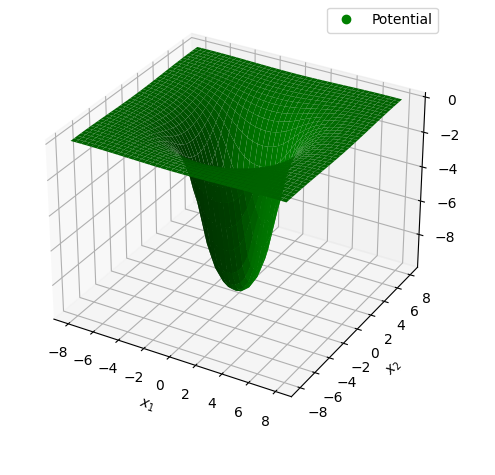}
    \\
    \includegraphics[width=\textwidth]{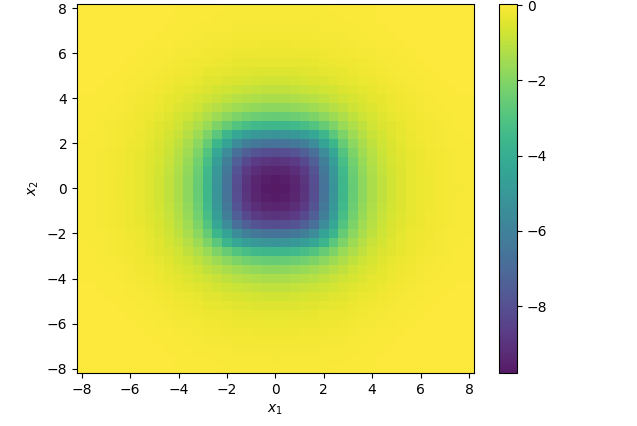}
    \caption{Potential $\tilde{\rV}$}
    \label{fig:pot3d}
\end{subfigure}
\hfill
\begin{subfigure}{0.42\textwidth}
    \includegraphics[width=\textwidth]{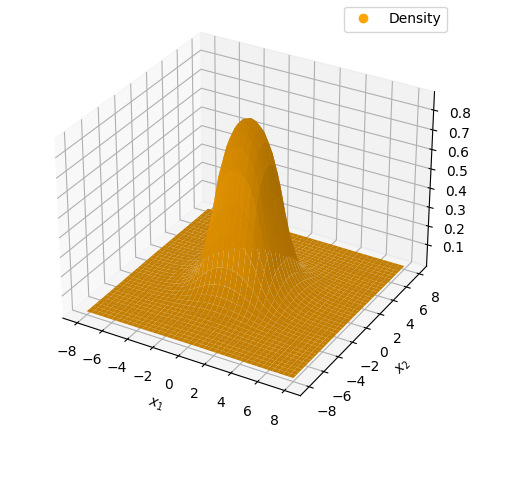}
    \\
    \includegraphics[width=\textwidth]{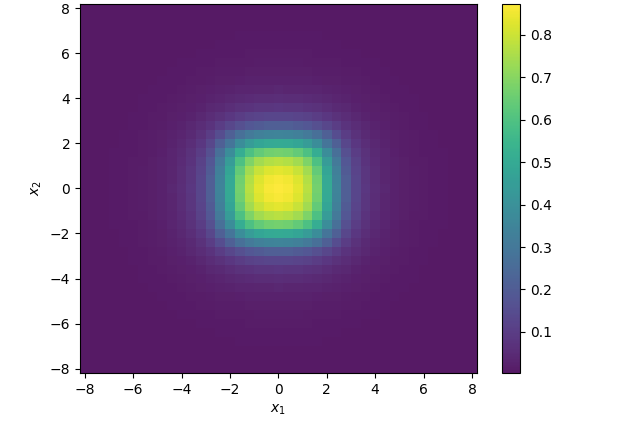}
    \caption{Density $\tilde{\rho}$}
    \label{fig:den3d}
\end{subfigure}
\caption{Results of the non-regularized model for $\mu_1$.}
\label{fig:resultsmu1}
\end{figure}

\vspace{-0.55cm}

\begin{figure}[H]
\centering
\begin{subfigure}{0.42\textwidth}
    \includegraphics[width=\textwidth]{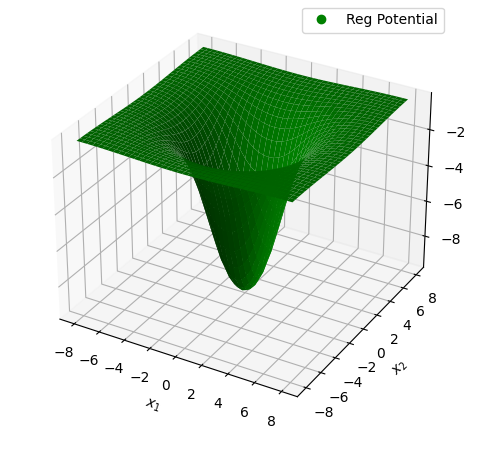}
    \\
    \includegraphics[width=\textwidth]{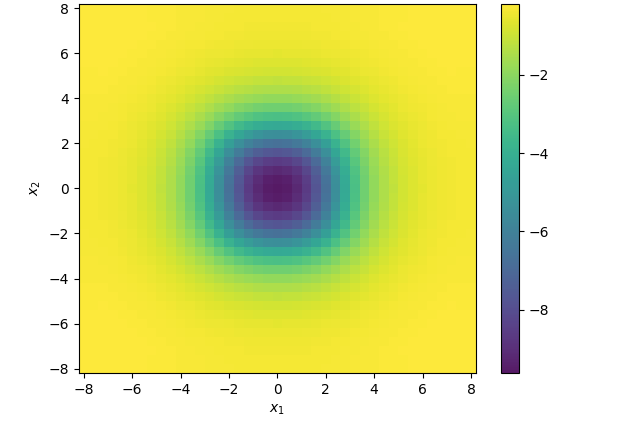}
    \caption{Regularized potential $\rV_{\TF}-\lambda$}
    \label{fig:pot3d}
\end{subfigure}
\hfill
\begin{subfigure}{0.42\textwidth}
    \includegraphics[width=\textwidth]{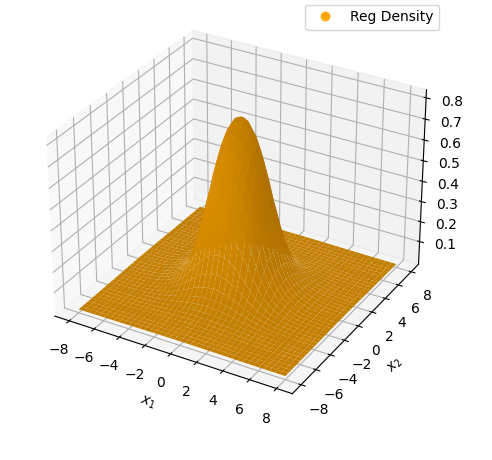}
    \\
    \includegraphics[width=\textwidth]{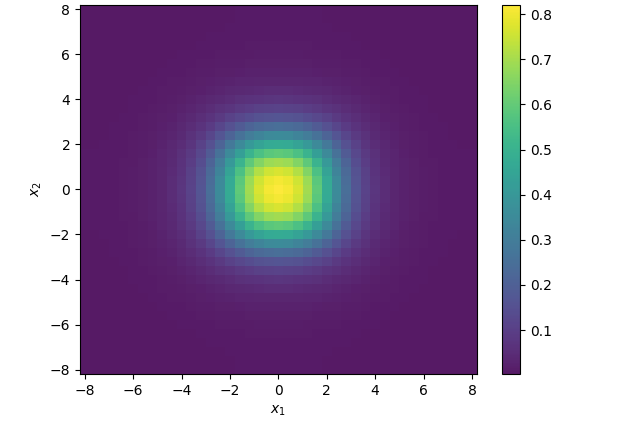}
    \caption{Regularized density $\rho_{\TF}$}
    \label{fig:den3d}
\end{subfigure}
\caption{Results of the regularized model for $\mu_1$.}
\label{fig:resultsregmu1}
\end{figure}

\begin{figure}[H]
\centering
\begin{subfigure}{0.45\textwidth}
    \includegraphics[width=\textwidth]{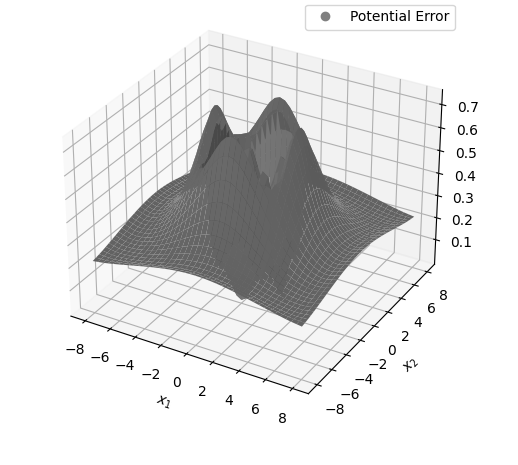}
    \\
    \includegraphics[width=\textwidth]{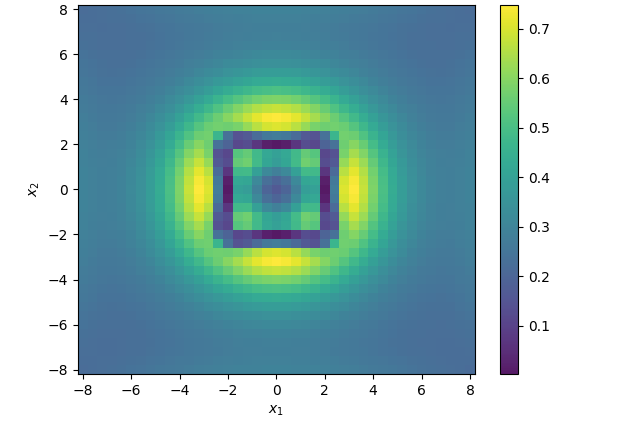}
    \caption{$|\rV_{TF}-\lambda-\tilde{\rV}|$}
    \label{fig:pot3d}
\end{subfigure}
\hfill
\begin{subfigure}{0.45\textwidth}
    \includegraphics[width=\textwidth]{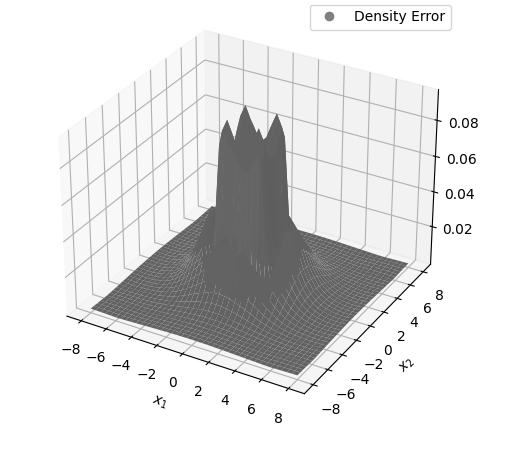}
    \\
    \includegraphics[width=\textwidth]{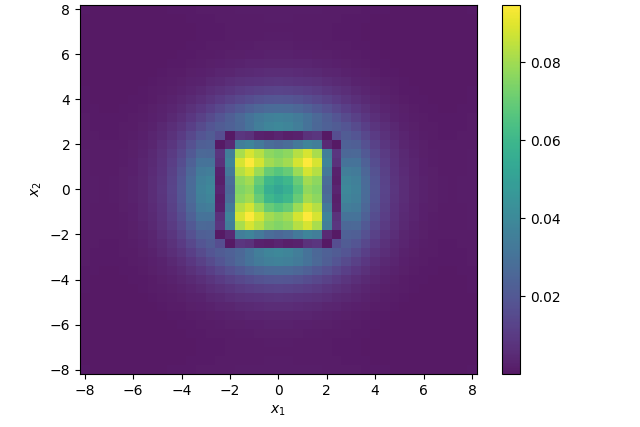}
    \caption{$|\rho_{TF}-\tilde{\rho}|$}
    \label{fig:den3d}
\end{subfigure}
\caption{Errors between the regularized and non-regularized models for $\mu_1$. }
\label{fig:resultserrormu1}
\end{figure}

From a computational view point, the regularized model ends in 1902s with 13 iterations versus 2025s for the non-regularized model. Moreover, the energy obtained by the regularized model is lower than that of the non-regularized model (83,11 versus 89,46). More in-depth study remains to be done with finer grids and other test cases, but this first simple test allows to see the interest of the regularized model with respect to the non-regularized model, where there are no theoretical results for the existence and uniqueness of the minimizer.

\bibliography{ref}
\bibliographystyle{siam}

\end{document}

\subsubsection{Case 2: two nanowires}

The second test case we consider corresponds to two different nanowires. We take
\begin{equation}\label{eq:mu2}
    \mu_2(x)=2\,\mathds{1}(2<x_2<4)\mathds{1}(|x_3|<1)+\mathds{1}(-3<x_2<-1)\mathds{1}(|x_3|<1).
\end{equation}

The results are displayed in Figure~\ref{fig:resultsmu2}. Once more, there is little difference in the two models' overall results. The difference in potentials and densities between the two appears to be zero far from the material. The error is of order..\% between densities and..\% between potentials in the vicinity of the material.
Ultimately, there is a..\% error between the two fundamental energies of the two models.

\begin{figure}[H]
\centering
\begin{subfigure}{0.3\textwidth}
    \includegraphics[width=\textwidth]{Figures/m_2_space.png}
    \caption{$\mu_2$}
    \label{fig:mu2}
\end{subfigure}
\begin{subfigure}{0.3\textwidth}
    \includegraphics[width=\textwidth]{example-image}
    \\
    \includegraphics[width=\textwidth]{example-image}
    \caption{Potential}
    \label{fig:potmu2}
\end{subfigure}
\begin{subfigure}{0.3\textwidth}
    \includegraphics[width=\textwidth]{example-image}
    \\
    \includegraphics[width=\textwidth]{example-image}
    \caption{Density}
    \label{fig:denmu2}
\end{subfigure}
\caption{Results for $\mu_2$.}
\label{fig:resultsmu2}
\end{figure}

\subsubsection{Case 3: two nanowires with a smooth density}
We consider here the case of two nanowires with a smooth nuclear density $\mu$. More precisely, we took
\begin{equation}\label{eq:mu3}
    \mu_3(x)=2 e^{-(x_3^2+(x_2-2)^2)}+e^{-\frac12(x_3^2+(x_2+2)^2)}.
\end{equation}

We draw the same conclusions from the results (Figure~\ref{fig:resultsmu3}) as we did in the second case. There is only a slight shift in the order of magnitude of the error between the regularized and non-regularized models, and the Thomas-Fermi potential and density appear smoother. ....

\begin{figure}[H]
\centering
\begin{subfigure}{0.3\textwidth}
    \includegraphics[width=\textwidth]{Figures/mu_3_space.png}
    \caption{$\mu_3$}
    \label{fig:mu2}
\end{subfigure}
\begin{subfigure}{0.3\textwidth}
    \includegraphics[width=\textwidth]{example-image}
    \\
    \includegraphics[width=\textwidth]{example-image}
    \caption{Potential}
    \label{fig:potmu2}
\end{subfigure}
\begin{subfigure}{0.3\textwidth}
    \includegraphics[width=\textwidth]{example-image}
    \\
    \includegraphics[width=\textwidth]{example-image}
    \caption{Density}
    \label{fig:denmu2}
\end{subfigure}
\caption{Results for $\mu_3$.}
\label{fig:resultsmu3}
\end{figure}

\subsubsection{Summary of results}

Table~\ref{tab:summaryresults} presents an overview of the various errors produced between the regularized and non-regularied models for each of the three cases that were examined. The computation time of the employed algortithms is also included.

\begin{table}[H]
\begin{tabular}{lrrrr}  
\hline
Test case    & Density error & Potential error & Energy error & Computation time\\
\hline
Case 1       &  $17\%$   & $14\%$   &  $10\%$ & $5073s$/$14$ iterations (reg) \\
& & & &  $3567s$ (non-reg)\\
Case 2      &  ?    &  ?    & ? & ? iterations (reg)\\
& & & &  ? (non-reg)\\
Case 3      &   ?   &  ?    &  ? & ? iterations (reg)\\
& & & &  ? (non-reg)\\
\hline
\end{tabular}
\caption{Errors between the results obtained by the regularized and non-regularized models.}
\label{tab:summaryresults}
\end{table}


Furthermore, as was the case for two-dimensional materials~\cite{GLM21}, the following inequality
\begin{equation}
    \Phi_{TF}\le \lambda
\end{equation}
holds in the various simulations that are taken into consideration.

The problem is more computationally demanding than that of two-dimensional materials~\cite{GLM21}, which makes sense given that the reduced model in the previous case is a one-dimensional problem. Therefore, further investigation is required to solve the regularized reduced model numerically with high accuracy in a reasonable amount of time, particularly when determining the appropriate size of the simulation box (the parameters $a$ and $b$).
 For this reason and in order to address also the rHF problem, we hope to explore more advanced numerical methods that are better suited for non-local problems in the future.  Nevertheless, in spite of their simplicity, the performed numerical simulations allow for the validation of the suggested regularized model since they yield results that are comparable to the non-regularized Thomas-Fermi model. 
\appendix
\section{Potential energy via thermodynamic limit}\label{sec. appendix pot energy thermo limit}
\subsection{ Three-dimensional Hartree interaction.}
 For $L>0$ large enough, consider the slab $\Gamma_{L} := \mathbb{R}^{2} \times\left[ -\frac{L}{2}, \frac{L}{2} \right] $ as our 'super-cell'. Then, the super-cell three-dimensional (Hartree) interaction is formally defined by
  \begin{equation*}
         \mathcal{D}_{3,L}(f)\:=\:\int_{}^{}\int_{\Gamma_{L}\times \Gamma_{L}}^{} G_{L}(x-y)f(x)f(y) \:dxdy,\quad f\in L^1(\Gamma_L),
     \label{eq: D3}
 \end{equation*}
where $G_{L}$ is the $L$-periodic Green's function solution to \begin{equation}
      -\Delta G_{L} = \sum_{k\in L\mathbb{Z}}\delta_{(k,0,0)}
      \label{eq: GL}
\end{equation}
with the following formal conditions 
\begin{equation}
      \begin{cases}
      G_{L}(x_{1},x_{2},x_{3}+kL)\:=\:G_{L}(x_{1},x_{2},x_{3}) \quad, \quad \forall k\in \mathbb{Z} \\
      G_{L}(x_{1},x_{2},x_{3}) \:=\: G_{L}(0,|x|,x_3) 
      \label{eq: Green_condition}
      \end{cases}
\end{equation}
Let us give an explicit expression of the three-dimensional Green’s function $G_{L}$.

\begin{proposition} Let us denote $\vec{x}=(x_1,x_2) \in \mathbb{R}^{2}$ the first two variables, we have  
\begin{equation*}\label{eq Green_function G_L}
     G_{L}(\vec{x},x_3) = -\frac{2}{L} \ln(|\vec{x}|) + \sum_{k\in \frac{2\pi}{L}\mathbb{Z}^{*}} \frac{2}{L}K_{0}(k|\vec{x}|)\; e^{\ri k x_3} + c 
\end{equation*} 
where $K_{0}(r)=\int_{0}^{+\infty }e^{-r\, \cosh(\theta)}\; d\theta$ is the modified Bessel function of the second type and $c$ a real constant.
In particular, if a density $f(\vec{x},x_3) = f(0,\vec{x})$ depends only on the two last variables $\vec{x}$, then  
\begin{equation}\label{eq pot energy with log kernel}
\int_{\Gamma_{L}} G_L (\vec{x}-\vec{y},x_3 - y_3)f(\vec{y},y_3)\, \rd\vec{y}\rd y_3 = -\frac{2}{L}\int_{\R^2}\ln(|\vec{x}-\vec{y}|)f(\vec{y})\,d\vec{y} + c \int_{\mathbb{R}^{2}} f(\vec{y})\;d\vec{y}.
\end{equation}

\end{proposition}
\begin{proof}
Since $G_{L}$ is periodic in $x_3$, we allow for distributional solution of ~\eqref{eq: GL} of the form
     \begin{equation}
         G_{L} = \sum_{k\in \frac{2\pi}{L}\mathbb{Z}} c_{k}\otimes  e^{ikx_{3}}
      \label{eq: fourier}
     \end{equation}
Where $c_{k}$ are radial distributions in $\mathbb{R}^{2}$. Taking into the account the following one-dimensional Poisson formula :
\begin{equation*}
         \sum_{k\in L\mathbb{Z}}\delta_{k}(x_3) = \frac{1}{L}\sum_{k\in \frac{2\pi}{L}\mathbb{Z}}e^{ikx_3},
      \label{eq: poisson1D}
     \end{equation*}
equation ~\eqref{eq: GL} becomes 
\begin{equation}\label{eq: ck}
    -\Delta c_{k}+k^{2}c_{k}  = \frac{4\pi}{L}\delta_{(0,0)},\qquad  \forall k \in \frac{2\pi}{L}\mathbb{Z}.
\end{equation}
For $\mathbf{k=0}$, a particular solution of ~\eqref{eq: ck} is given by $c_{0}^{P}(\vec{x}) = -\frac{2}{L} \ln(|\vec{x}|) $. A radial solution of the homogeneous equation in \eqref{eq: ck} is a radial smooth  harmonic function which is constant.
For $\mathbf{|k|>0}$, a particular solution  of \eqref{eq: ck} is given by $c_{k}^{P}$=$\mathcal{F}_2^{-1}(\frac{4\pi}{L(k^{2}+|\xi|^{2})}) \in L^{2}(\mathbb{R}^{2}) $ which is expressed explicitly by 
\begin{equation*}
        c_{k}^{P}(\vec{x}) = \frac{1}{L}\int_{0}^{+\infty } \frac{\exp{(-\frac{|\vec{x}|^{2}}{4t}-k^{2}t)}}{t}\, dt.
\end{equation*}
Therefore, the following change of variable $t=\frac{|k||\textbf{x}|}{2}e^{-\theta}$ yields $c_{k}^{P}(\vec{x})= \frac{2}{L}K_{0}(|k|\,|\vec{x}|)$. Now homogeneous solution of \eqref{eq: ck} are radial smooth function $g_{k}$ verifying
 \begin{equation*}  g_{k}^{''}(r) + \frac{g_{k}^{'}(r)}{r} - k^{2}g_{k}(r) = 0 \quad in \quad \mathbb{R}^{*}_{+}.
     \end{equation*}
Thus, there are two real constants $a$ and $b$, such that $g_{k}(r)= aK_{0}(|k|.r) + bI_{0}(|k|.r)$, where $I_{0}$ is the modified Bessel function of the first type. Using the fact that $K_{0}$ diverges at the origin, we must have $a=0$, and by considering the Fourier transform of $I_{0}$ and taking into the account the singularity of \eqref{eq: ck} at the origin, we shall also get $b=0$. Therefore, for $k \in \frac{2\pi}{L}\mathbb{Z}$

$$ c_{k}(x)=\begin{cases}
-\frac{2}{L} \ln(|\vec{x}|) + c & \mbox{if $k=0$}\\
\frac{2}{L}K_{0}(|k|.|\vec{x}|) & \mbox{otherwise.}
\end{cases}.$$
Finally, distributional solutions of \eqref{eq: GL} and \eqref{eq: Green_condition} are of the form 
\begin{equation}\label{eq: Gl}
         G_L (x_1,\textbf{x}) = -\frac{2}{L} \ln(|\textbf{x}|) + \sum_{k\in \frac{2\pi}{L}\mathbb{N}^{*}} \frac{4}{L}K_0 (k|\textbf{x}|)\; cos(k x_3) + c.\quad
\end{equation}
\end{proof}     
Away from the nano-wire axis $\vec{x}=0$, $G_{L}$ is is exponentially close to $-\frac{2}{L} \ln(|\vec{x}|)$. The Green’s function  $G_{L}$ is defined up to a global constant $c$, that one can take equal to 0. Actually, for neutral system, which is the case in this paper, the choice of the constant has no role.

\subsection{ Two-dimensional Hartree interaction.} 
For neutral densities, namely $f\in L^1(\R^2)$ such that $\int_{\R^2} f\, \rd x=0$, as shows \eqref{eq pot energy with log kernel}, the potential energy takes the form
\begin{equation*}
    \widetilde{\mathcal{D}_{2}}(f):=\frac{1}{L}\cD_{3,L}(f)=-2\iint_{\mathbb{R}^{2}\times \mathbb{R}^{2}}\ln(|x-y|)f(x)f(y)\:dx dy.
\end{equation*}
The associated logarithmic potential $V_{f}$ is given by
\begin{equation*}
    V_{f}(x)= -2\int_{\mathbb{R}^{2}} \ln(|x-y|)f(y)\; dy
\end{equation*}
which is well defined almost everywhere for $f \in L^{1}(\mathbb{R}^{2})$ with
 \begin{equation}\label{log_integrability}
    \ln(1+|.|)\,f \in L^{1}(\mathbb{R}^{2}).
\end{equation} 
\textcolor{red}{It is unclear now weither $\tilde{D}_2$ is bounded below or not? But also it is difficult to predict what conditions $f$ have to satisfies in order to ensure that the integral in \eqref{eq pot energy with log kernel} is finite. The alternative was to define the potential energy otherwise on a large space, so that for 'good' densities it will coincide with the one in \eqref{eq pot energy with log kernel}.}

